\renewcommand{\section}{\@startsection{section}{1}{0pt}{20pt}{6pt}{\large\bfseries}}
\numberwithin{equation}{section}
\theoremstyle{plain}
  \newtheorem{thm}{Theorem}[section]
  \newtheorem{lemma}[thm]{Lemma}
   \newtheorem{cor}[thm]{Corollary}
\theoremstyle{definition}
\newcommand{\R}{\mathbb{R}}
\renewcommand{\Re}{{\mathfrak{Re}}}
\newcommand{\E}{\mathbb{E}}
\newcommand{\fx}[2]{{}_#1\Psi_#2}
\begin{document}

\title[Absolute ruin in the Ornstein-Uhlenbeck type risk model]{Absolute ruin in the Ornstein-Uhlenbeck type risk model}
\author{R.L. Loeffen}
\address{Weierstrass Institute for Applied Analysis and Stochastics,Mohrenstrasse 39
10117 Berlin, Germany} \email{RonnieLambertus.Loeffen@wias-berlin.de}

\author{P. Patie}
   \address{D\'epartement de Math\'ematiques, Universit\'e Libre de Bruxelles\\
Boulevard du Triomphe,  B-1050, Bruxelles, Belgique.}
\email{ppatie@ac.ulb.be}

\begin{abstract}
We start by showing that the finite-time absolute ruin probability in the classical risk model with constant interest force can be expressed in terms of the transition probability of a positive Ornstein-Uhlenbeck type process, say $\hat{X}$. Our methodology  applies to the case when the dynamics of the aggregate claims process is  a subordinator. From this expression,  we easily deduce necessary and sufficient  conditions for the infinite-time absolute ruin to occur. We proceed by showing that,  under some technical conditions, the transition density of  $\hat{X}$ admits a spectral type representation involving merely the limiting distribution of the process. As a by product, we obtain a series expansions for the finite-time absolute ruin probability. On the way, we also derive, for the aforementioned risk process, the Laplace transform of the first-exit time from an interval from above.  Finally, we illustrate our results by detailing some examples.

\bigskip

\noindent{\it Key words:} Risk theory, absolute ruin, Ornstein-Uhlenbeck type processes, first-passage time, spectral representation.

\bigskip

\noindent{\it 2000 Mathematics Subject Classification:} 60G18, 60G51, 60B52
\end{abstract}

\thanks{Research partially funded by  the AXA research fund for the project Modern Actuarial Risk Theory. Both authors would like to thank H. Albrecher for fruitful discussions. The second author is also grateful to F. Avram for valuable comments and the warm hospitality during his stay at the University of Pau.}

\date{}

\maketitle

\bibliographystyle{plain}

\section{Introduction}
Let $Z=(Z_t, t\geq0)$ be a driftless subordinator defined on a filtered probability space $(\Omega,\mathcal{F},(\mathcal F_t)_{t\geq0},\mathbb{P})$, that is $Z$ is a $\R^+-$valued process with stationary and independent increments.  It is well known that the law of $Z$ is characterized by its Laplace exponent which admits the following L\'evy-Khintchine representation
\begin{equation} \label{eq:le_s}
\phi(\beta)=-\log\mathbb{E}(\mathrm{e}^{-\beta Z_1})=\int_0^\infty (1-\mathrm{e}^{-\beta y})\nu( \mathrm{d}y), \quad \beta\geq0,
\end{equation}
where the L\'evy measure satisfies the integrability condition  $\int_0^\infty (1 \wedge y) \nu( \mathrm{d}y)<\infty$ and we refer to the monograph of Kyprianou \cite{Kyprianou-06} for background on subordinators. By $\widehat{\mathbb{P}}$ we denote the law of the dual of $Z$, i.e. the law of $-Z$ under $\mathbb{P}$.
For $x,r,c\in\mathbb{R}$,  we denote   by $\mathbb{P}_x^{(r,c)}$ the law of the process $X$  defined, for any $t\geq0$, by
\begin{equation} \label{eq:def-X}
X_t= \mathrm{e}^{rt} \left( x+\int_0^t\mathrm{e}^{-rs}\mathrm{d}(cs-Z_s) \right),
\end{equation}
where $Z\sim \mathbb{P}$. Similarly,  $\widehat{\mathbb{P}}_x^{(r,c)}$ stands for the law of the process $X$ as defined in \eqref{eq:def-X} with $Z\sim \widehat{\mathbb{P}}$. We simply write $\mathbb{P}_x^{(r)}$ (resp. $\widehat{\mathbb{P}}_x^{(r)}$) for $\mathbb{P}_x^{(r,0)}$ (resp. $\widehat{\mathbb{P}}_x^{(r,0)}$).  Note that equivalently, $X$ is the unique strong solution to the stochastic differential equation
\begin{equation}
\label{SDE}
\mathrm{d}X_t=(rX_t+c)\mathrm{d}t -\mathrm{d}Z_s,\quad X_0=x.
\end{equation}

The process $X$ under $\mathbb{P}_x^{(0,c)}$ with $c>0$ has been used in the literature to model the reserves of an insurance company, the parameter $c$ standing for the premium rate, the jumps of $Z$ standing for the claims and $x$ standing for the initial value of the reserves. In particular, when $\nu(0,\infty)<\infty$, $X$ under $\mathbb{P}_x^{(0,c)}$  is the classical risk process where the claims arrive according to a Poisson process with intensity parameter $\nu(0,\infty)$ and the claim distribution is given by $\nu(\mathrm{d}x)/\nu(0,\infty)$.
When $r>0$, the process $X$ under $\mathbb{P}_x^{(r,c)}$ has been suggested as a  risk process where  the cost of lending/borrowing money are  taken into account.  In this model the insurer earns (credit) interest when the surplus is positive and when the surplus becomes negative, the insurer can cover the deficit by a loan for which he has to pay a (debit) interest. Although in practice the debit interest is much higher than the credit interest, we restrict ourselves to the case where both rates are equal since this choice is particularly tractable and allows us to use techniques which can no longer be used in the general case. Processes of the form \eqref{eq:le_s} with $Z$ a general L\'evy process are known in the literature as processes of Ornstein-Uhlenbeck type (for short OU-type) and therefore in this paper we call the  process $X$ under $\mathbb{P}_x^{(r,c)}$ with $r>0$, the OU-type risk process.

% \bigskip
% Let  $X = (X_t, t\geq0)$  denote the  risk process with constant interest force $r>0$ and claims generated by a subordinator. That is, for any $t\geq 0$, we have
% \begin{equation*}
% X_t= \mathrm{e}^{rt} \left( x+\int_0^t\mathrm{e}^{-rs} d(cs-Z_s) \right).
% \end{equation*}
% where $x\geq0$ is the initial surplus and $c>0$ the gross premium rate.
% %If $r=0$, then $X_t=u + ct -Z_t$ is the classical surplus process.
% If for any $E\subset\R^+$, $\nu(E)=\lambda G(E)$ with $\lambda >0$ and $G$ a probability measure on $\R^+$, then $X$ is the classical compound risk process with constant interest force and with intensity parameter $\lambda$ and jumps distribution $G$. When $\nu(\R^+)=\infty$, we say that the risk process $X$ has infinite activity.

 From \eqref{SDE}, it is clear that when the OU-type risk process reaches the interval $(-\infty,-c/r]$, the premium rate $c$ can no longer compensate the interest payments and so the surplus will decrease to minus infinity. Following Gerber  \cite{Gerber-71}, we say that in this case absolute ruin occurs.
This model and in particular the event of absolute ruin, has been the focus of many research in insurance mathematics, its first appearance can be traced back to Segerdahl \cite{Segerdahl-42}. For more recent investigations and substantial refinements, we mention Gerber \cite{Gerber-71},  Dassios and Embrechts \cite{Embrechts-Dassios-89}, Embrechts and Schmidli \cite{Embrechts-Schmidli-94}, Schmidli \cite{schmidli1994}, Sundt and Teugels \cite{Sundt-Teugels-95}, Albrecher et al.~\cite{Albrecher-01}, Gerber and Yang \cite{Gerber-Yang-07} and Cai \cite{Cai-07}. We refer to the survey papers of Paulsen \cite{paulsen_survey1} and \cite{Paulsen-08} for an overview of ruin models with interest. For general background in ruin theory, we refer to Gerber \cite{ge79} and Albrecher and  Asmussen \cite{Albrecher-Asmussen-10}.

We also point out that the process $X$ under $\widehat{\mathbb{P}}_x^{(-r)}$ with $r>0$, is well known in the literature and has appeared in various settings. For instance, in mathematical finance, this process has been used by Barndorff-Nielsen and Shephard for the modeling of  the stochastic volatility of stock prices, see \cite{Barndorff-Shephard-01}. It   also  belongs to the class of one factor affine term structure models, see e.g.~Filipovic~\cite{filipovic-01}.
Moreover, when the L\'evy measure is finite, it is a specific example of a Poisson shot noise process, see e.g. Perry et al. \cite{Perry-01} and  Iksanov and Jurek \cite{iksanovjurek}.

 The remainder of the paper is organized as follows. The next section is devoted to the statement of our main results which are then proved in Section \ref{section:proofs}. Examples illustrating our approach are presented in Section \ref{section:examples}.

\section{Main results}
We start by providing a representation of the law of the absolute ruin time. To this end, we introduce some notation. First, let, for any $a\leq x$,
\begin{equation*}
\tau_{a}=\inf\{s>0:\: X_s<a\}
\end{equation*}
be the first-passage time below the level $a$ for $X$.
  Henceforth, we shall assume that $r>0$.
We are interested in the  distribution  of $\tau_{-c/r}$ under $\mathbb{P}_x^{(r,c)}$.  We call this random variable the absolute ruin time. As mentioned in the introduction, the reason for the adjective `absolute' is that once the OU-type risk process  goes below the level $-c/r$, it will never go back above this critical level, i.e.
\begin{equation}
\label{def_absoluteruin}
\mathbb{P}^{(r,c)}_x \left(X_t\geq -c/r, t>\tau_{-c/r}\right)=0.
\end{equation}
We first note that from \eqref{eq:def-X} we immediately see that the process $X$ under $\mathbb{P}_x^{(r,c)}$ has the same law as the process $(X_t-c/r,t\geq0)$ under $\mathbb{P}_{x+c/r}^{(r)}$. In particular
\begin{equation*}
\mathbb{P}_x^{(r,c)} \left(\tau_{-c/r}\in\mathrm{d}t\right) = \mathbb{P}_{x+c/r}^{(r)} \left(\tau_0\in \mathrm{d}t\right), \quad t\geq0.
\end{equation*}
Based on this observation, we state all the results in the paper for the $c=0$ case only; the analogue for $c\neq0$ is then obvious. %henceforth assume without loss of generality that $c=0$.
The first theorem gives the link between the distribution of $\tau_0$ under $\mathbb{P}^{(r)}_x$ and the distribution of $X$ under $\widehat{\mathbb{P}}^{(-r)}_0$, which leads to an explicit expression for the Laplace transform in space of the finite-time absolute ruin probability. This is in contrast with the the finite-time ruin probability in the $r=0$ case, where only an explicit expression for the double Laplace transform in space and time exists (cf. Theorem 8.1(ii) of Kyprianou \cite{Kyprianou-06}).
\begin{thm}  \label{thm:1}
For any $x>0$ and $t\geq0$, we have
\begin{equation*}
 \mathbb{P}^{(r)}_x\left(\tau_{0} \leq t\right) =  \widehat{\mathbb{P}}^{(-r)}_0\left(X_t> x \right).
\end{equation*}
Consequently, for any $\beta>0$, we have
\begin{equation*}
\mathbb{P}^{(r)}_{{\bf{e}}_{\beta}}(\tau_{0}> t )   =  \exp \left(  -\int_0^t\phi(\beta \mathrm{e}^{-rs}) \mathrm{d}s \right)
 \end{equation*}
where ${\bf{e}}_{\beta}$ stands for the exponential distribution of parameter $\beta>0$ (where we used the notation ~$\mathbb{P}^{(r)}_{{\bf{e}}_{\beta}} (A)=\int_{\R}\mathbb{P}^{(r)}_x (A){\bf{e}}_{\beta}(dx)$).
\end{thm}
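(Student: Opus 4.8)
The plan is to establish the key identity $\mathbb{P}^{(r)}_x(\tau_0\le t)=\widehat{\mathbb{P}}^{(-r)}_0(X_t>x)$ by a pathwise time-reversal argument, and then to deduce the Laplace transform statement by averaging over an exponential initial condition. For the first part, I would use the explicit representation \eqref{eq:def-X} with $c=0$: under $\mathbb{P}^{(r)}_x$ we have $X_t=\mathrm{e}^{rt}(x-\int_0^t\mathrm{e}^{-rs}\,\mathrm{d}Z_s)$, so that $\tau_0\le t$ is the event that $\int_0^s\mathrm{e}^{-ru}\,\mathrm{d}Z_u\ge x$ for some $s\le t$, i.e. that $\sup_{s\le t}\int_0^s\mathrm{e}^{-ru}\,\mathrm{d}Z_u\ge x$. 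On the other side, under $\widehat{\mathbb{P}}^{(-r)}_0$ the process solving \eqref{SDE} with $r$ replaced by $-r$, $c=0$, $X_0=0$ and driven by $-Z$ (under $\mathbb{P}$) is $X_t=\mathrm{e}^{-rt}\int_0^t\mathrm{e}^{ru}\,\mathrm{d}Z_u$ (the sign of the driving term cancels with the dual). The crucial observation is the time-reversal identity for Lévy processes: the process $(Z_t-Z_{(t-s)-})_{0\le s\le t}$ has the same law as $(Z_s)_{0\le s\le t}$. Writing $\int_0^t\mathrm{e}^{ru}\,\mathrm{d}Z_u=\mathrm{e}^{rt}\int_0^t\mathrm{e}^{-rv}\,\mathrm{d}\widetilde Z_v$ for the reversed process $\widetilde Z_v=Z_t-Z_{(t-v)-}$ after the substitution $v=t-u$, one gets that $\mathrm{e}^{-rt}\int_0^t\mathrm{e}^{ru}\,\mathrm{d}Z_u$ has the same law as $\int_0^t\mathrm{e}^{-rv}\,\mathrm{d}\widetilde Z_v=\widetilde Z_t-r\int_0^t\mathrm{e}^{-rv}\widetilde Z_v\,\mathrm{d}v$ — but more to the point, one should track the running maximum: $\widehat X_t>x$ should translate, via reversal, exactly into $\sup_{s\le t}\int_0^s\mathrm{e}^{-ru}\,\mathrm{d}Z_u>x$ (a Spitzer/Dassios–Embrechts style duality between the terminal value of the OU-type process and the supremum of the integrated exponential). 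I would carry this out by first treating $Z$ with paths of bounded variation (so the integrals are pathwise Lebesgue–Stieltjes integrals) and then passing to the general driftless subordinator by monotone approximation.

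With the identity in hand, the second statement is a short computation. Integrating $\mathbb{P}^{(r)}_x(\tau_0>t)=\widehat{\mathbb{P}}^{(-r)}_0(X_t\le x)$ against ${\bf e}_\beta(\mathrm{d}x)=\beta\mathrm{e}^{-\beta x}\,\mathrm{d}x$ on $(0,\infty)$ gives
\begin{equation*}
\mathbb{P}^{(r)}_{{\bf e}_\beta}(\tau_0>t)=\int_0^\infty \widehat{\mathbb{P}}^{(-r)}_0(X_t\le x)\,\beta\mathrm{e}^{-\beta x}\,\mathrm{d}x=\widehat{\mathbb{E}}^{(-r)}_0\!\left(\mathrm{e}^{-\beta X_t}\right),
\end{equation*}
by Fubini and the elementary fact that $\int_0^\infty \mathbb{P}(Y\le x)\beta\mathrm{e}^{-\beta x}\,\mathrm{d}x=\mathbb{E}(\mathrm{e}^{-\beta Y})$ for a nonnegative random variable $Y$ (here one uses that $\widehat X_t\ge 0$ under $\widehat{\mathbb{P}}^{(-r)}_0$, since it is $\mathrm{e}^{-rt}$ times an integral of $\mathrm{e}^{ru}$ against the increments of the subordinator $Z$). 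It then remains to identify $\widehat{\mathbb{E}}^{(-r)}_0(\mathrm{e}^{-\beta X_t})$. Using the representation $\widehat X_t=\int_0^t\mathrm{e}^{-r(t-u)}\,\mathrm{d}Z_u$ and approximating the stochastic integral by Riemann–Stieltjes sums of independent increments of $Z$, the independence and stationarity of increments together with \eqref{eq:le_s} yield
\begin{equation*}
\widehat{\mathbb{E}}^{(-r)}_0\!\left(\mathrm{e}^{-\beta X_t}\right)=\exp\!\left(-\int_0^t\phi\!\left(\beta\mathrm{e}^{-r(t-u)}\right)\,\mathrm{d}u\right)=\exp\!\left(-\int_0^t\phi\!\left(\beta\mathrm{e}^{-rs}\right)\,\mathrm{d}s\right),
\end{equation*}
after the substitution $s=t-u$, which is the claimed formula.

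The main obstacle is the rigorous justification of the time-reversal step and, in particular, the claim that reversal sends the terminal value $\widehat X_t$ to the running supremum of the integrated exponential rather than merely to its terminal value. One has to be careful that the integral $\int_0^t\mathrm{e}^{-rs}\,\mathrm{d}Z_s$ is \emph{increasing} in $t$ (true, since $Z$ is a subordinator and $\mathrm{e}^{-rs}>0$), so that in fact $\sup_{s\le t}\int_0^s\mathrm{e}^{-ru}\,\mathrm{d}Z_u=\int_0^t\mathrm{e}^{-ru}\,\mathrm{d}Z_u$ and no genuine supremum is involved — this monotonicity is exactly what makes the $r>0$ case so much more tractable than $r=0$, and it is also what underlies the remark in the text about the single (rather than double) Laplace transform. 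Thus the real content is the distributional identity $\{\int_0^t\mathrm{e}^{-ru}\,\mathrm{d}Z_u\ge x\}\stackrel{d}{=}\{\widehat X_t\ge x\}$, which follows from the reversal $v\mapsto t-v$ in the Lebesgue–Stieltjes integral combined with $(Z_{(t-v)-}-Z_{t})_{v}\stackrel{d}{=}(-Z_v)_v$; the only technical point to dispatch carefully is the behaviour at the jump times under this reversal, handled cleanly by first assuming $\nu(0,\infty)<\infty$ (compound Poisson case, where everything is a finite sum) and then extending by approximation, using that both sides are monotone in $x$ and continuous in the relevant parameters.
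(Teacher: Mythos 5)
Your proposal is correct and follows essentially the same route as the paper: the time-reversal (duality) lemma for L\'evy processes identifies the law of $\int_0^t \mathrm{e}^{-rs}\,\mathrm{d}Z_s$ with that of $X_t$ under $\widehat{\mathbb{P}}_0^{(-r)}$, while the absoluteness of ruin (equivalently, your observation that $s\mapsto\int_0^s \mathrm{e}^{-ru}\,\mathrm{d}Z_u$ is increasing) reduces $\{\tau_0\le t\}$ to $\{X_t<0\}$, and the second assertion then follows by integrating against ${\bf e}_\beta$ and invoking the standard exponential formula for the Laplace transform of the OU-type process. The technical points you flag (jump times under reversal, approximation) are dispatched in the paper simply by citing the duality lemma from Bertoin, so no separate compound-Poisson reduction is needed.
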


%\noindent For an insurer,  it is important to understand under which conditions on the model the absolute ruin will almost surely occur. If it is not the case, it is important to derive the probability that this absolute ruin occurs. We deduce from the above Theorem the following. LINK TO SATO	

Theorem  \ref{thm:1} in combination with Theorem 17.5 of Sato \cite{Sato-99} leads to the following result about the infinite time absolute ruin probability. We refer to Sato \cite{Sato-99} for background on self-decomposable random variables.
\begin{cor} \label{cor:1} \
\begin{itemize}
 \item [(i)] If  $\int_1^\infty \log(y)\nu(\mathrm{d}y)<\infty$, then, under $\widehat{\mathbb{P}}_0^{(-r)}$, $X_t$ converges in distribution, as $t\rightarrow \infty$, to a positive self-decomposable random variable $X_{\infty}$ and
\begin{equation*}
\mathbb{P}^{(r)}_x\left(\tau_{0} <\infty\right) =  \widehat{\mathbb{P}}^{(-r)}_0 \left(X_{\infty}\geq x \right).
\end{equation*}
Moreover, for any $\beta>0$, we have
%\begin{equation*}
\[\mathbb{P}^{(r)}_{{\bf{e}}_{\beta}}(\tau_{0}= \infty ) = \exp \left( -\int_0^{\beta}\frac{\phi(u)}{ru}\mathrm{d}u \right).\]
 %\end{equation*}
%where
%\begin{equation*}
%$\varphi_{r}(\beta) = \frac{1}{r}\int_0^{\beta}\frac{\phi(s)}{s}ds$.
%\end{equation*}

\item[(ii)]
If  $\int_1^\infty\log(y)\nu( \mathrm{d}y) =\infty$, then, for any $x>0$,
\begin{equation*} %\label{eq:schmidli}
 \mathbb{P}^{(r)}_x\left(\tau_{0} < \infty\right) =  1.
\end{equation*}
\end{itemize}
\end{cor}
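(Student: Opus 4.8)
The plan is to let $t\to\infty$ in the two identities of \refthm{thm:1}. Fix $x>0$. Since $\{\tau_{0}\le t\}\uparrow\{\tau_{0}<\infty\}$ as $t\to\infty$, monotone convergence gives $\mathbb{P}^{(r)}_{x}(\tau_{0}<\infty)=\lim_{t\to\infty}\mathbb{P}^{(r)}_{x}(\tau_{0}\le t)=\lim_{t\to\infty}\widehat{\mathbb{P}}^{(-r)}_{0}(X_{t}>x)$, the last equality by \refthm{thm:1}; so the corollary reduces to understanding the law of $X_{t}$ under $\widehat{\mathbb{P}}^{(-r)}_{0}$ as $t\to\infty$. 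By \eqref{eq:def-X}--\eqref{SDE}, under $\widehat{\mathbb{P}}^{(-r)}_{0}$ the process $X$ has the law of the OU-type process started at $0$ solving $\mathrm{d}X_{t}=-rX_{t}\,\mathrm{d}t+\mathrm{d}S_{t}$, where $S$ is the subordinator with Laplace exponent $\phi$; in particular $X_{t}\ge0$, and time-reversing the increments of $S$ on $[0,t]$ yields the classical identity $X_{t}\overset{d}{=}\int_{0}^{t}\mathrm{e}^{-rs}\,\mathrm{d}S_{s}$, whose right-hand side is nondecreasing in $t$ and thus increases a.s.\ to some $I\in[0,\infty]$.

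I would then invoke Theorem~17.5 of Sato \cite{Sato-99} for this OU-type process, whose driving L\'evy measure is $\nu$; Sato's log-moment condition $\int_{\{y>2\}}\log y\,\nu(\mathrm{d}y)<\infty$ is equivalent to $\int_{1}^{\infty}\log y\,\nu(\mathrm{d}y)<\infty$ because $\nu(1,\infty)<\infty$. In case (i) that theorem provides a self-decomposable limit law $\mu$ with $X_{t}$ converging in distribution to $\mu$; since $X_{t}\ge0$, $\mu$ is carried by $[0,\infty)$, and we set $X_{\infty}\sim\mu$. Because $\int_{0}^{t}\mathrm{e}^{-rs}\,\mathrm{d}S_{s}\uparrow I<\infty$ a.s.\ with $I\overset{d}{=}X_{\infty}$, the events $\{\int_{0}^{t}\mathrm{e}^{-rs}\,\mathrm{d}S_{s}>x\}$ increase to $\{I>x\}$, hence $\widehat{\mathbb{P}}^{(-r)}_{0}(X_{t}>x)\uparrow\widehat{\mathbb{P}}^{(-r)}_{0}(X_{\infty}>x)=\widehat{\mathbb{P}}^{(-r)}_{0}(X_{\infty}\ge x)$, the last equality because a non-degenerate self-decomposable law is absolutely continuous (the degenerate case $\nu\equiv0$ being trivial). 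For the ``Moreover'' part I would let $t\to\infty$ in the second identity of \refthm{thm:1}: as $s\mapsto\phi(\beta\mathrm{e}^{-rs})\ge0$, one gets $\mathbb{P}^{(r)}_{\mathbf{e}_{\beta}}(\tau_{0}=\infty)=\exp\bigl(-\int_{0}^{\infty}\phi(\beta\mathrm{e}^{-rs})\,\mathrm{d}s\bigr)$, and the substitution $u=\beta\mathrm{e}^{-rs}$ turns the exponent into $\int_{0}^{\beta}\phi(u)/(ru)\,\mathrm{d}u$.

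In case (ii), Sato's theorem asserts that no limiting distribution exists. Since $\int_{0}^{t}\mathrm{e}^{-rs}\,\mathrm{d}S_{s}\uparrow I$ and $\{I<\infty\}$ belongs to the tail $\sigma$-field of $S$, a zero-one law gives $\mathbb{P}(I<\infty)\in\{0,1\}$; it cannot equal $1$, for otherwise the law of $I$ would be a limiting distribution for $X_{t}$, so $I=+\infty$ a.s.\ and therefore $\widehat{\mathbb{P}}^{(-r)}_{0}(X_{t}>x)\to1$, i.e.\ $\mathbb{P}^{(r)}_{x}(\tau_{0}<\infty)=1$ for every $x>0$. (One can also avoid Sato here: letting $t\to\infty$ in \refthm{thm:1} gives $\mathbb{P}^{(r)}_{\mathbf{e}_{\beta}}(\tau_{0}=\infty)=\exp(-\int_{0}^{\beta}\phi(u)/(ru)\,\mathrm{d}u)$, and a Tonelli computation shows $\int_{0}^{\beta}\phi(u)/u\,\mathrm{d}u$ is comparable to $\int_{0}^{1}y\,\nu(\mathrm{d}y)+\int_{1}^{\infty}\log y\,\nu(\mathrm{d}y)$, hence infinite under the hypothesis of (ii); thus $\mathbb{P}^{(r)}_{\mathbf{e}_{\beta}}(\tau_{0}=\infty)=0$ for all $\beta>0$, and as $x\mapsto\mathbb{P}^{(r)}_{x}(\tau_{0}<\infty)$ is nonincreasing, uniqueness of Laplace transforms forces it to equal $1$ for all $x>0$.)

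The remaining work is routine: the change of variable $u=\beta\mathrm{e}^{-rs}$ and the Tonelli comparison. The sole external input is Sato's Theorem~17.5 (or, in the self-contained variant, the zero-one law plus the Laplace-transform criterion for convergence of $\int_{0}^{\infty}\mathrm{e}^{-rs}\,\mathrm{d}S_{s}$). The one point that needs a little care is passing from ``$>x$'' to ``$\ge x$'' in part (i); I would handle it through the monotone coupling above together with absolute continuity of self-decomposable laws — which is where I expect the only (minor) friction.
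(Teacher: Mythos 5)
Your proposal is correct and follows essentially the same route as the paper, which gives no separate argument beyond the remark that the corollary follows from letting $t\to\infty$ in Theorem~\ref{thm:1} combined with Theorem~17.5 of Sato \cite{Sato-99}. The extra details you supply — the monotone coupling via $\int_0^t\mathrm{e}^{-rs}\,\mathrm{d}Z_s$, the passage from ``$>x$'' to ``$\geq x$'' via continuity of non-degenerate self-decomposable laws, the change of variable $u=\beta\mathrm{e}^{-rt}$, and the Tonelli comparison showing $\int_0^\beta\phi(u)u^{-1}\mathrm{d}u<\infty$ iff $\int_1^\infty\log y\,\nu(\mathrm{d}y)<\infty$ — are all sound and consistent with what the paper leaves implicit.
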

%\begin{remark}
%Note that the log moment condition $\E[\log(1+Z_1)]<\infty$ on the distribution of the claim sizes is equivalent to  %the condition  $\int_1^{\infty} \log(y)\nu(dy)=\int_1^\infty \frac{\nu(y,\infty)}{y} dy<\infty$. Furthermore, it ensures that $\varphi_{r}(0)=0$.
%\end{remark}

It is interesting to note that for the risk process without interest ($r=0$), ruin is certain when the safety loading is negative that is whenever $c\leq \int_1^\infty y\nu(\mathrm{d}y)$, whereas for $r>0$ the premium rate does not have any influence on whether ruin is certain or not.

\bigskip

Before stating  the next two theorems we need to introduce a little further notation. Let  $\xi=\mathbb{I}_{\{\int_1^\infty \log(y)\nu(\mathrm{d}y)=\infty\}}$ and define the measure $W$ on $[0,\infty)$ via its Laplace transform as follows
\begin{equation}\label{eq:lt_w}
\int_0^\infty \mathrm{e}^{-\beta x} W(\mathrm{d}x) = \mathrm{e}^{- \int_{\xi}^\beta \frac{\phi(u)}{ru} \mathrm{d}u}, \quad \beta\geq \xi.
\end{equation}
Noting, by an integration by parts, that
\begin{equation*}
\int_{\xi}^\beta \frac{\phi(u)}{ru} \mathrm{d}u = \int_0^\infty (1-\mathrm{e}^{(\xi-\beta) x}) \frac{\mathrm{e}^{-\xi x}\nu(x,\infty)}{rx} \mathrm{d}x,
\end{equation*}
 it follows readily that
\[W(\mathrm{d}x)=\mathrm{e}^{\xi x} \mu(\mathrm{d}x),\] where $\mu$ is the law of a positive, self-decomposable random variable with infinite L\'evy measure $\frac{\mathrm{e}^{-\xi x}\nu(x,\infty)}{rx} \mathrm{d}x$ (cf. \cite[Corollary 15.11]{Sato-99}). It follows that the function $W(x):=W[0,x]$ is well-defined, increasing and by \cite[Theorem 27.4]{Sato-99} that $W(x)$ is continuous. Note that if $\int_1^\infty \log(y)\nu(\mathrm{d}y)<\infty$, then $W$ is simply the distribution function of the random variable $X_\infty$ from Corollary \ref{cor:1}. We extend $W$ to the whole real line by setting $W(x)=0$ on $(-\infty,0)$.
Further, for all $n<\nu(0,\infty)/r$, the $n$-th derivative of $W$, denoted by $W^{(n)}$, exists and is continuous on $(-\infty,\infty)$ (cf. \cite[Theorem 28.4]{Sato-99}).

%We proceed by presenting a theorem stating that the transition distribution of the process $X$ under $\widehat{\mathbb{P}}_0^{(-r)}$ admits, under some technical conditions on the L\'evy measure, a discrete spectral representation.

The next theorem concerns a discrete spectral type representation of the transition distribution  of the process $X$ under $\widehat{\mathbb{P}}_0^{(-r)}$. We stress  that the spectral theory for self-adjoint operators  in an Hilbert space structure is well established. In particular, McKean~\cite{McKean-56} discusses in details the nature of the spectrum of the semigroup of linear diffusions. However, this general theory does not apply here since we are dealing with non self-adjoint operators.  In this context, there are little examples in the literature where such a spectral decomposition has been given. One notable exception is the paper of  Ogura \cite{Ogura-70}. Therein, the author  provides conditions under which the semigroup of continuous state branching processes with immigration (for short CBI) admits a discrete or continuous spectral representation. Furthermore, in a very elegant fashion, he manages to characterize through Laplace transform, the eigenmeasure and eigenfunctions associated to the  semigroup. To be more precise, in \cite[Theorem 3.1]{Ogura-70}, Ogura shows, under some conditions, that for some $t_0\geq0$ the semigroup  $P_t(x,\mathrm{d}y)$  of a CBI process satisfies the discrete spectral representation
\begin{equation}
\label{spectral_representation}
P_t(x,\mathrm{d}y)= \sum_{n=0}^\infty \eta_n(x) \zeta_n(\mathrm{d}y)\mathrm{e}^{-\lambda_n t}, \quad t\geq t_0, \: x,y\geq0,
\end{equation}
where $\eta_n$, resp.~$\zeta_n$, are eigenfunctions, resp.~eigenmeasures, of $P_t$ corresponding to the eigenvalue $\mathrm{e}^{-\lambda_n t}$ with $\lambda_n\geq0$.
 Although the process $X$ under $\widehat{\mathbb{P}}_x^{(-r)}$ does belong to the class of CBI processes, they were excluded in \cite{Ogura-70}. Moreover, the  methodology of \cite{Ogura-70} does not extend to our case; in particular the suggestion made in footnote 3) on p.309 of \cite{Ogura-70} does not lead to the right direction. In Theorem \ref{thm:sr} below we are able to give, under some technical conditions on the L\'evy measure, a spectral representation similar to  \eqref{spectral_representation} for the semigroup of this process when $x=0$. %for Theorem \ref{thm:sr} By combining our results, we are able to provide some examples of finite-time absolute ruin probability as a spectral representation.
We refer to Chapter 2.7 of Bingham et al.~\cite{Bingham-Goldie-Teugels-89} for the definition of  a quasi-monotone function and remark  that the theorem remains valid if  quasi-monotone is replaced by ultimately monotone. Recall that a function $f:(0,\infty)\rightarrow (0,\infty)$ is ultimately monotone at infinity if there exists $A>0$ such that $f$ is monotone on $(A,\infty)$ and that $f$ is slowly varying at infinity if $\lim_{x\rightarrow\infty} \frac{f(tx)}{f(x)}=1$ for all $t>0$.

 \begin{thm} \label{thm:sr}
 Assume that the L\'evy measure $\nu$ satisfies 
 \begin{enumerate}[i)]
 \item $\int_1^\infty \mathrm{e}^{-\beta x}\nu(\mathrm{d}x)<\infty$ for all $\beta\in\mathbb R$, \item $\nu(x,\infty)= \ell(1/x)x^{-\alpha}$ with $0<\alpha<1$ and $\ell$ a quasi-monotone slowly varying function at infinity. 
     \end{enumerate}
 Then for any $t>t_\alpha=-\frac1{r\alpha}\log\cos\left(\frac{\pi\alpha}2\right)$  and $x\in\mathbb{R}$, the transition distribution  of the process  $X$ under $\widehat{\mathbb{P}}^{(-r)}_x$  is given by
\begin{equation*}
\widehat{\mathbb P}^{(-r)}_x\left(X_t\in\mathrm{d}y\right)=\sum_{n=0}^\infty \mu_n \mathrm{e}^{-rnt} W^{(n+1)}(y-x\mathrm{e}^{-rt})\mathrm{d}y, \quad y\in\mathbb R,
\end{equation*}
 where
\begin{equation*}
\mu_n=\widehat{\mathbb{E}}^{(-r)}[X_{\infty}^n]/n!= \frac1{n!} \left( \frac{\mathrm{d}^n}{\mathrm{d}v^n} \exp \left( -\int_0^{v}\frac{\phi(u)}{ru}\mathrm{d}u \right) \right)_{v=0}.
\end{equation*}
Consequently, for any $x>0$,
 \begin{equation} \label{eq:main}
 \mathbb{P}^{(r)}_{x}(\tau_{0}> t )  = \sum_{n=0}^\infty \mu_nW^{(n)}(x)\mathrm{e}^{-rnt}, %
 \quad  t>t_{\alpha}.
 \end{equation}
Moreover, for any $n$, $W^{(n+1)}(y)\mathrm{d}y$ is an eigenmeasure of $\widehat{\mathbb P}^{(-r)}_x(X_t\in\mathrm{d}y)$ corresponding to the eigenvalue $\mathrm{e}^{-rn t}$ in the sense that
 \begin{equation*}
 \int_{-\infty}^\infty \widehat{\mathbb P}^{(-r)}_x(X_t\in\mathrm{d}y) W^{(n+1)}(x)\mathrm{d}x = \mathrm{e}^{-rn t} W^{(n+1)}(y)\mathrm{d}y, \quad y\in\mathbb R.
 \end{equation*}
 \end{thm}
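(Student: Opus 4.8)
The plan is to identify the Laplace transform in space of $X_t$ under $\widehat{\mathbb P}^{(-r)}_x$, to recognise it as a convergent series of Laplace transforms of the iterated derivatives $W^{(n+1)}$, and to invert term by term; $t_\alpha$ will emerge as the threshold beyond which that series converges. First I would reduce to $x=0$: by the additive form of \eqref{eq:def-X}, under $\widehat{\mathbb P}^{(-r)}_x$ the variable $X_t$ equals $x\mathrm e^{-rt}$ shifted by $X_t$ under $\widehat{\mathbb P}^{(-r)}_0$, and all three assertions respect this translation (this is the r\^ole of the argument $y-x\mathrm e^{-rt}$), so the general case follows from the case $x=0$ by convolution. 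Integrating the identity $\widehat{\mathbb P}^{(-r)}_0(X_t\le x)=\mathbb P^{(r)}_x(\tau_0>t)$ of \refthm{thm:1} against an exponential density — using that $X_t>0$ almost surely, since hypothesis (ii) forces $\nu(0,\infty)=\infty$ — yields $\widehat{\mathbb E}^{(-r)}_0[\mathrm e^{-\beta X_t}]=\mathbb P^{(r)}_{\mathbf e_\beta}(\tau_0>t)=\exp(-\int_0^t\phi(\beta\mathrm e^{-rs})\,\mathrm d s)$; the substitution $u=\beta\mathrm e^{-rs}$ together with \eqref{eq:lt_w}, in which $\xi=0$ because hypothesis (ii) also forces $\int_1^\infty\log y\,\nu(\mathrm d y)<\infty$, then gives
\begin{equation*}
\widehat{\mathbb E}^{(-r)}_0\big[\mathrm e^{-\beta X_t}\big]=\frac{\widehat W(\beta)}{\widehat W(\beta\mathrm e^{-rt})},\qquad \widehat W(\beta):=\int_0^\infty\mathrm e^{-\beta x}\,W(\mathrm d x)=\exp\Big(-\int_0^\beta\frac{\phi(u)}{ru}\,\mathrm d u\Big).
\end{equation*}
By hypothesis (i), $\phi$ — hence $\widehat W$ and, since $\widehat W$ has no zeros, $1/\widehat W$ — is entire, so $1/\widehat W(s)=\sum_{n\ge0}\mu_n s^n$ on all of $\mathbb C$ with $\mu_n$ the coefficients of the statement; moreover $W$ is absolutely continuous with a $C^\infty$ density, and each $W^{(k)}$ is continuous on $\mathbb R$ and vanishes on $(-\infty,0]$, so that $\int_0^\infty\mathrm e^{-\beta y}W^{(n+1)}(y)\,\mathrm d y=\beta^n\widehat W(\beta)$ for every $n$.

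Inserting $1/\widehat W(\beta\mathrm e^{-rt})=\sum_n\mu_n\beta^n\mathrm e^{-rnt}$ into the first display, I would write, at least formally,
\begin{equation*}
\widehat{\mathbb E}^{(-r)}_0\big[\mathrm e^{-\beta X_t}\big]=\sum_{n\ge0}\mu_n\mathrm e^{-rnt}\beta^n\widehat W(\beta)=\int_0^\infty\mathrm e^{-\beta y}\Big(\sum_{n\ge0}\mu_n\mathrm e^{-rnt}W^{(n+1)}(y)\Big)\mathrm d y ,
\end{equation*}
so that, by injectivity of the Laplace transform, $\widehat{\mathbb P}^{(-r)}_0(X_t\in\mathrm d y)=\big(\sum_n\mu_n\mathrm e^{-rnt}W^{(n+1)}(y)\big)\,\mathrm d y$ — which is the assertion for $x=0$ — provided the series converges locally uniformly in $y$ and the interchange of $\sum$ and $\int$ is legitimate. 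Thus the whole matter reduces to showing, for $t>t_\alpha$, that $\sum_n\mu_n\mathrm e^{-rnt}W^{(n+1)}(y)$ converges uniformly on compact $y$-sets and that $\sum_n|\mu_n|\mathrm e^{-rnt}\int_0^\infty\mathrm e^{-\beta y}|W^{(n+1)}(y)|\,\mathrm d y<\infty$ for $\beta$ on a half-line.

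This convergence is the heart of the argument and the step I expect to be the hardest. Hypotheses (i)--(ii) and Karamata's Tauberian theorem give $\phi(\beta)\sim\Gamma(1-\alpha)\ell(\beta)\beta^{\alpha}$ as $\beta\to\infty$, hence $-\log\widehat W(\beta)\sim c'\ell(\beta)\beta^{\alpha}$ with $c'=\Gamma(1-\alpha)/(r\alpha)$, and, continued along rays in the right half-plane, $\mathrm{Re}\int_0^\beta\phi(u)/(ru)\,\mathrm d u\sim c'\cos(\alpha\arg\beta)\,\ell(|\beta|)|\beta|^{\alpha}$; consequently $\widehat W$ decays on each vertical line $\mathrm{Re}\,\beta=c>0$ like $\exp(-c''\ell(|\mathrm{Im}\,\beta|)|\mathrm{Im}\,\beta|^{\alpha})$ with $c''=\cos(\tfrac{\pi\alpha}2)\,c'$, while one shows the Taylor coefficients of $1/\widehat W$ obey a matching bound governed by the rate $c'$ (this needs care, since $1/\widehat W$ itself grows much faster in the left half-plane, and the coefficient estimate must be extracted by stationary-phase–type cancellation or a deformed contour). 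Feeding the Bromwich representation $W^{(n+1)}(y)=\frac1{2\pi\mathrm i}\int_{\mathrm{Re}\,\beta=c}\beta^n\mathrm e^{\beta y}\widehat W(\beta)\,\mathrm d\beta$ (for $y\ge0$) into the bound for $\mu_n$ and optimising — a saddle-point estimate in the radius and in the abscissa $c$ — one finds the series of integrals converges absolutely exactly when $c'\mathrm e^{-r\alpha t}<c''$, i.e.\ $\mathrm e^{-r\alpha t}<\cos(\tfrac{\pi\alpha}2)$, i.e.\ $t>t_\alpha=-\tfrac1{r\alpha}\log\cos(\tfrac{\pi\alpha}2)$; the factor $\cos(\pi\alpha/2)=\mathrm{Re}(\mathrm i^{\alpha})$ enters through the imaginary directions. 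The technically delicate points are: propagating the slowly varying $\ell$ uniformly through these complex estimates — this is where quasi-monotonicity, or ultimate monotonicity, of $\ell$ is used, via Potter-type and exponential (de Bruijn–Kasahara–type) Tauberian bounds — controlling the Taylor coefficients of $1/\widehat W$ in spite of its rapid left-half-plane growth, and uniformity in $y$ of the optimal contour on compacts. Once these are in place, dominated convergence legitimises the two displays, establishing the density formula for $x=0$ and, by convolution, for all $x$.

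It remains to read off the two consequences. For \eqref{eq:main}, \refthm{thm:1} gives $\mathbb P^{(r)}_x(\tau_0>t)=\widehat{\mathbb P}^{(-r)}_0(X_t\le x)=\int_{-\infty}^x\sum_n\mu_n\mathrm e^{-rnt}W^{(n+1)}(y)\,\mathrm d y$, and integrating term by term (legitimate by the above) and using $W^{(n)}(-\infty)=0$ yields $\sum_n\mu_n\mathrm e^{-rnt}W^{(n)}(x)$. For the eigenmeasure identity I would argue directly: since $\widehat{\mathbb E}^{(-r)}_x[\mathrm e^{-\beta X_t}]=\mathrm e^{-\beta x\mathrm e^{-rt}}\widehat W(\beta)/\widehat W(\beta\mathrm e^{-rt})$, a change of variables shows that the Laplace transform in $y$ of $y\mapsto\int_{\mathbb R}\widehat{\mathbb P}^{(-r)}_x(X_t\in\mathrm d y)\,W^{(n+1)}(x)\,\mathrm d x$ equals
\begin{equation*}
\mathrm e^{rt}\cdot\frac{\widehat W(\beta)}{\widehat W(\beta\mathrm e^{-rt})}\cdot\mathrm e^{-rt}(\beta\mathrm e^{-rt})^{n}\widehat W(\beta\mathrm e^{-rt})=\mathrm e^{-rnt}\beta^{n}\widehat W(\beta)=\mathrm e^{-rnt}\int_0^\infty\mathrm e^{-\beta y}W^{(n+1)}(y)\,\mathrm d y ,
\end{equation*}
which gives the claim by injectivity; alternatively, differentiating $n$ times the stationarity relation $\mathcal A^*W'=0$ for the formal adjoint $\mathcal A^*g(x)=r\tfrac{\mathrm d}{\mathrm d x}(xg(x))+\int_0^\infty(g(x-y)-g(x))\,\nu(\mathrm d y)$ of the generator of $X$ under $\widehat{\mathbb P}^{(-r)}$ shows $\mathcal A^*W^{(n+1)}=-rn\,W^{(n+1)}$, identifying $W^{(n+1)}$ as a left-eigenfunction with eigenvalue $\mathrm e^{-rnt}$.
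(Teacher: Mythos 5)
Your overall architecture matches the paper's: factor the transform of $X_t$ under $\widehat{\mathbb P}^{(-r)}_0$ as $\mathrm e^{\varphi_r(\beta\mathrm e^{-rt})-\varphi_r(\beta)}$, expand $\mathrm e^{\varphi_r(\cdot)}=\sum_n\mu_n(\cdot)^n$ using that hypothesis (i) makes $\varphi_r$ entire, recognise $\beta^n\mathrm e^{-\varphi_r(\beta)}$ as the transform of $W^{(n+1)}$, invert term by term, reduce general $x$ to $x=0$ by translation, and obtain \eqref{eq:main} and the eigenmeasure identity by integration and by transform uniqueness respectively. You also correctly locate the origin of $t_\alpha$: the competition between $\cos(\pi\alpha/2)$ (from the imaginary direction) and $\mathrm e^{-r\alpha t}$.

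However, the step you yourself flag as the heart of the matter --- justifying the interchange of sum and integral --- is not actually carried out, and the route you sketch for it (individual saddle-point bounds on the Taylor coefficients $\mu_n$ of $1/\widehat W$, combined with Bromwich-contour estimates of $\widehat W$ on vertical lines, then an optimisation over the abscissa) is both unexecuted and much harder than necessary; it is genuinely unclear that usable bounds on $\mu_n$ can be extracted ``by stationary-phase-type cancellation'' from a function growing rapidly in the left half-plane. The idea you are missing is that $\mu_n=\widehat{\mathbb E}^{(-r)}[X_\infty^n]/n!\geq 0$, so no coefficient estimates are needed at all: working on the imaginary axis with $W^{(n+1)}(y)=\frac1{2\pi}\int_{\mathbb R}(\mathrm iu)^n\mathrm e^{-\varphi_r(\mathrm iu)+\mathrm iuy}\,\mathrm du$, the sum of absolute values resums exactly, $\sum_n\mu_n(|u|v)^n=\mathrm e^{\varphi_r(|u|v)}$, and the domination reduces to showing $\int_{\mathbb R}\mathrm e^{\varphi_r(|u|v)-\Re\varphi_r(\mathrm iu)}\,\mathrm du<\infty$. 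This is a purely real-variable comparison: writing $r\Re\varphi_r(\mathrm iu)=\int_0^\infty(1-\cos(ux))\nu(x,\infty)x^{-1}\mathrm dx$ and $r\varphi_r(u)=\int_0^\infty(1-\mathrm e^{-ux})\nu(x,\infty)x^{-1}\mathrm dx$, hypothesis (ii) together with the sine-transform asymptotics for quasi-monotone regularly varying functions (\cite[Eq.\ (4.3.8)]{Bingham-Goldie-Teugels-89}, which is exactly where quasi-monotonicity is used) and Karamata's theorems gives $r\Re\varphi_r(\mathrm iu)\sim\cos(\tfrac{\pi\alpha}2)\tfrac{\Gamma(1-\alpha)}{\alpha}\ell(u)u^\alpha$ and $r\varphi_r(vu)\sim v^\alpha\tfrac{\Gamma(1-\alpha)}{\alpha}\ell(u)u^\alpha$, whence the integrand decays like $\exp(-Au^{\alpha-\delta})$ precisely when $v^\alpha<\cos(\tfrac{\pi\alpha}2)$, i.e.\ $t>t_\alpha$. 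Without this observation your proof is a programme rather than a proof. (Two smaller points: $\int_1^\infty\log y\,\nu(\mathrm dy)<\infty$ follows from hypothesis (i), not (ii), since (ii) only controls $\nu$ near the origin; and in the eigenmeasure step the Fubini needs the boundedness of the signed function $W^{(n+1)}$, which follows from its integrable Fourier representation.)
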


Our final theorem concerns the two-sided exit problem for the OU type process. More specifically, we compute the Laplace transform of the stopping time
\begin{equation*}
\tau_a^+=\inf\{t>0: X_t>a\},
\end{equation*}
on the event that $X$ exits the interval $[0,a]$ at $a$.
When $X$ is a spectrally negative L\'evy process, this quantity is given by $f_q(x)/f_q(a)$ with $q\geq0$ being the parameter of the Laplace transform and where $f_q$ (resp.~$f_0$) is the so-called $q$-scale  function (resp.~scale function) of the spectrally negative L\'evy process, cf. \cite[Theorem 8.1]{Kyprianou-06}. In Theorem \ref{thm:scale}, we derive a similar expression for the case  where $X$ is the OU-type risk process and show that the `corresponding $q$-scale function' is given in terms of the fractional integral of  the function $W$, which we denote by $W_q$. Hence if one has an explicit expression for $W$, one automatically gets $W_q$ for $q>0$ in closed-form; this is in contrast to the case of spectrally negative L\'evy processes, see the discussion on p.1674 of Kyprianou and Rivero \cite{Kyprianou-Rivero-08}.

\begin{thm}\label{thm:scale}
For $q\geq0$, define the function $W_q:(-\infty,\infty)\rightarrow[0,\infty)$ by $W_q(x)=0$ on $(-\infty,0)$ and on $[0,\infty)$ by the Riemann-Liouville fractional integral of order $q$ of $W,$ i.e.
\begin{equation}\label{eq:dfi}
 W_{q}(x)= \frac{1}{\Gamma(q)}\int_0^x(x-y)^{q-1}W(y)\mathrm{d}y, \quad x\geq0.
\end{equation}
Then, for any $q\geq 0$, $x\leq a$ and $a>0$, %the mapping $x\mapsto  W_{q}(x)$ is continuous and increasing on $\R^+$. Moreover, for any $-c/r< x\leq a$,
we have
\begin{equation*}
\mathbb{E}^{(r)}_x \left[  \mathrm{e}^{-q\tau_a^+} \mathbf{1}_{\{\tau_a^+ <\tau_0\}} \right] = \mathbb{E}^{(r)}_x \left[  \mathrm{e}^{-q\tau_a^+} \mathbf{1}_{\{\tau_a^+ <\infty\}} \right] = \frac{ W_{q/r}(x)}{ W_{q/r}(a)}.
\end{equation*}
% In particular, when $q=0$ we obtain
% \begin{equation*}
% \mathbb{P}^{(r)}_x(\tau_a^+ < \tau_{0})=\frac{ W(x)}{ W(a)}.
% \end{equation*}
\end{thm}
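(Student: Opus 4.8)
The plan is to identify $W_{q/r}$ as, for the OU-type risk process, the analogue of the $q$-scale function of a spectrally negative L\'evy process, and then to conclude by an optional stopping argument. I keep the reduction to $c=0$ explained above, and write $\mathcal{A}$ for the generator of $X$ under $\mathbb{P}^{(r)}_x$, namely $\mathcal{A}f(x)=rxf'(x)+\int_0^\infty(f(x-y)-f(x))\,\nu(\mathrm{d}y)$. The first equality in the statement is immediate from \eqref{def_absoluteruin}: since $a>0$, once $X$ has dropped strictly below $0$ it never comes back, so $\{\tau_a^+<\infty\}=\{\tau_a^+<\tau_0\}$ up to a $\mathbb{P}^{(r)}_x$-null set, and $\{\tau_a^+=\tau_0\}$ is impossible. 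When $x\leq 0$ both sides of the claimed identity vanish (using $W_{q/r}\equiv0$ on $(-\infty,0)$ and again \eqref{def_absoluteruin}), so it remains to prove $\mathbb{E}^{(r)}_x[\mathrm{e}^{-q\tau_a^+}\mathbf{1}_{\{\tau_a^+<\tau_0\}}]=W_{q/r}(x)/W_{q/r}(a)$ for $0<x\leq a$.

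The analytic heart is the eigenrelation
\[
\mathcal{A}W_{q/r}(x)=q\,W_{q/r}(x),\qquad x>0,
\]
which I would establish by Laplace transforms. Set $\Phi(\beta):=\mathrm{e}^{-\int_\xi^\beta\phi(u)/(ru)\,\mathrm{d}u}$; by \eqref{eq:lt_w} and an integration by parts $\int_0^\infty\mathrm{e}^{-\beta x}W(x)\,\mathrm{d}x=\Phi(\beta)/\beta$, so by \eqref{eq:dfi} the Laplace transform of $g:=W_{q/r}$ is $\widehat{g}(\beta)=\beta^{-q/r-1}\Phi(\beta)$, and note $\Phi'(\beta)=-\phi(\beta)\Phi(\beta)/(r\beta)$. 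Since $g$ is absolutely continuous with $g(0)=0$, the transform of $g'$ is $\beta\widehat{g}(\beta)$, whence the transform of $x\mapsto rxg'(x)$ equals $-r\frac{\mathrm{d}}{\mathrm{d}\beta}(\beta\widehat{g}(\beta))=(q+\phi(\beta))\widehat{g}(\beta)$; and because $g$ vanishes on $(-\infty,0)$, the transform of $x\mapsto\int_0^\infty(g(x-y)-g(x))\,\nu(\mathrm{d}y)$ equals $\int_0^\infty(\mathrm{e}^{-\beta y}-1)\,\nu(\mathrm{d}y)\cdot\widehat{g}(\beta)=-\phi(\beta)\widehat{g}(\beta)$. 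Adding, the transform of $\mathcal{A}g$ is $q\widehat{g}(\beta)$, and uniqueness of Laplace transforms gives the eigenrelation on $(0,\infty)$ (at $x=0$ both sides vanish). For $q=0$ this is just $\mathcal{A}W=0$, which also follows from the stationarity of the law $W$ of $X_\infty$ in Corollary~\ref{cor:1}.

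Given the eigenrelation, put $T:=\tau_0\wedge\tau_a^+$ and $M_t:=\mathrm{e}^{-q(t\wedge T)}W_{q/r}(X_{t\wedge T})$. Applying the change-of-variables formula for the finite-variation semimartingale $X$, and using $\mathcal{A}W_{q/r}=q\,W_{q/r}$ on $(0,a)$ together with $W_{q/r}\equiv0$ on $(-\infty,0]$, the drift of $M$ vanishes on $[0,T)$, so $M$ is a local martingale; since $W_{q/r}$ is non-decreasing and $X_{t\wedge T}\leq a$, we have $0\leq M_t\leq W_{q/r}(a)<\infty$, so $M$ is a bounded martingale. Moreover $T<\infty$ a.s.: if $\int_1^\infty\log(y)\,\nu(\mathrm{d}y)=\infty$ this is Corollary~\ref{cor:1}(ii); otherwise $I_\infty:=\int_0^\infty\mathrm{e}^{-rs}\,\mathrm{d}Z_s<\infty$ a.s., on $\{\tau_0=\infty\}$ one has $I_\infty\leq x$ and hence, since $\mathbb{P}(I_\infty=x)=0$, $X_t=\mathrm{e}^{rt}(x-\int_0^t\mathrm{e}^{-rs}\,\mathrm{d}Z_s)\to\infty$ by \eqref{eq:def-X}, forcing $\tau_a^+<\infty$. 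Finally, on $\{\tau_a^+<\tau_0\}$ the process creeps upward across $a$ (its continuous part is the increasing drift and all jumps are downward), so $X_T=a$, whereas on the complement $X_T=X_{\tau_0}\leq0$ and thus $W_{q/r}(X_T)=0$. Optional stopping at $T$ then gives $W_{q/r}(x)=\mathbb{E}^{(r)}_x[\mathrm{e}^{-qT}W_{q/r}(X_T)]=W_{q/r}(a)\,\mathbb{E}^{(r)}_x[\mathrm{e}^{-q\tau_a^+}\mathbf{1}_{\{\tau_a^+<\tau_0\}}]$, and dividing by $W_{q/r}(a)$ (which is nonzero because $W>0$ on $(0,a)$) completes the proof.

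The step I expect to be the main obstacle is justifying the change-of-variables computation in the presence of the low regularity of $W$: in the compound Poisson regime $W$ need not be smooth, so one has to check that $g=W_{q/r}$ is $C^1$ on $(0,\infty)$ --- this holds because a non-degenerate self-decomposable law is absolutely continuous, whence $W$ has a density $w$ on $(0,\infty)$ and $W_{q/r}'$ is the fractional integral of order $q/r$ of $w$, continuous on $(0,\infty)$ though possibly unbounded near $0$ --- and that the compensated jump part of the formula has an integrable compensator near the boundary, which follows from $\int_0^1 y\,\nu(\mathrm{d}y)<\infty$ together with the local boundedness of $W_{q/r}'$ away from $0$, the possible blow-up at $0$ being harmless since the continuous (drift) part of $X$ pushes away from $0$. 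The Laplace-transform manipulations of the second step likewise need a word: they are valid on a suitable right half-plane and extend to all of $(0,\infty)$ by real analyticity.
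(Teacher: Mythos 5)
Your argument is correct in outline and reaches the right conclusion, but at the decisive step it takes a genuinely different route from the paper, and the difference matters. You establish the \emph{generator} eigenrelation $\mathcal{A}W_{q/r}=qW_{q/r}$ on $(0,\infty)$ by a Laplace-transform computation (which is accurate) and then pass to the martingale $\mathrm{e}^{-q(t\wedge T)}W_{q/r}(X_{t\wedge T})$ via a change-of-variables formula. The paper instead proves the integrated (semigroup) identity $\mathrm{e}^{-qt}\mathbb{E}^{(r)}_x[W_{q/r}(X_t)]=W_{q/r}(x)$ directly: it computes the Laplace transform in $x$ of the left-hand side by combining the weak duality $\langle P^{(r)}_tf,g\rangle=\mathrm{e}^{-rt}\langle f,\widehat{P}^{(-r)}_tg\rangle$ of Lemma~\ref{lemma_interval}(ii) with \eqref{laplacetransform} and the transform \eqref{eq:lt_wq} of $W_q$, and concludes by uniqueness of Laplace transforms; Dynkin's formula and dominated convergence then finish the proof with no smoothness of $W_{q/r}$ required beyond continuity and monotonicity. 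This is exactly where your route pays a price: the obstacle you flag at the end --- justifying It\^o's formula for $W_{q/r}(X_t)$ --- is real and not fully dispatched by your sketch. Continuity of $W'$ on $(0,\infty)$ is delicate in the compound Poisson regime (the paper only guarantees continuity of $W^{(n)}$ for $n<\nu(0,\infty)/r$), and the integrability of the jump compensator near $0$ is not settled by the remark that the drift pushes away from $0$: the term $W_{q/r}(x)\nu(x,\infty)$ arising from jumps into $(-\infty,0)$ can blow up as $x\downarrow 0$ when $\nu(0,\infty)=\infty$ and $q/r$ is small, and the process does spend non-negligible time near $0$ after a large jump. So your proof would need the additional regularity and localization lemmas you allude to in order to be complete, whereas the paper's duality argument sidesteps them entirely. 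The remaining ingredients of your proof --- the reduction of the first equality to \eqref{def_absoluteruin}, the verification that $\tau_0\wedge\tau_a^+<\infty$ a.s.\ (which the paper gets from Lemma~\ref{lemma_interval}(iii), while your argument via $I_\infty=\int_0^\infty\mathrm{e}^{-rs}\mathrm{d}Z_s$ is a correct and more explicit alternative), the creeping identity $X_{\tau_a^+}=a$, and the positivity of $W$ on $(0,\infty)$ --- all match the paper or are sound substitutes.
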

We point out that Theorem \ref{thm:scale} simultaneously gives the first-passage time above $a$ of the OU-type risk process. For a process $X$ defined by \eqref{eq:def-X} with $Z$ a general L\'evy process with no negative jumps  and with $r<0$, Hadjiev \cite{Hadjiev-84} (under an extra assumption) and Novikov \cite{Novikov-03} provided an explicit expression for the Laplace transform of the first-passage time of $X$ above a fixed level. Although not considered in these papers, it can be checked that their methodology and expression extend to the case where $r>0$ and $Z$ a L\'evy process with no downward jumps and paths of unbounded variation. However, when $r>0$ and $Z$ is a subordinator (that is, the OU-type risk process), the situation is completely different due to the possibility of absolute ruin.
%In particular, we show in Theorem \ref{thm:scale} that in this case the Laplace transform of $\tau_a^+$ is given in terms of the fractional integral of  the function $W$.
We remark here that we cannot expect to find a more explicit expression like the one given in \cite{Hadjiev-84} and \cite{Novikov-03} for the $r<0$ case, since this would lead to an explicit expression for the distribution function of any positive self-decomposable random variable.

% We split between the two cases of Corollary \ref{cor:1}. When $\int_1^\infty \log(y)\nu(dy)<\infty$, the quantity
% \begin{equation*}
%  \varphi_r(\beta) = \frac1r \int_0^\beta \frac{\phi(u)}u du\\
% %=   \int_0^\beta\int_0^\infty\mathrm{e}^{-ux}\nu(x,\infty) dx du \\
% =   \frac1r \int_0^\infty  \left( 1-\mathrm{e}^{-\beta x} \right) \frac{\nu(x,\infty)}x dx
% \end{equation*}
% is well-defined for all $\beta\geq0$.
% We further define the function $W:(-\infty,\infty)\rightarrow[0,1]$ by
% \begin{equation*}
%  W(x)=\widehat{\mathbb{P}}^{(-r)}_0 \left(X_{\infty}\geq x \right)
% \end{equation*}
% and note that by Theorem \ref{thm:1} and Corollary \ref{cor:1} its Laplace transform is given by
% \begin{equation*}
% \int_0^\infty \mathrm{e}^{-\beta x} W(x)\mathrm{d}x =\frac{1}{\beta}e^{-\varphi_{r}(\beta)}.
% \end{equation*}
% When $\int_1^\infty \log(y)\nu(dy)=\infty$, we define the function $W$ via its Laplace-Stieltjes transform given by
% \begin{equation*}
% \int_0^\infty \mathrm{e}^{-\beta x} W(\mathrm{d}x) = \mathrm{e}^{-\frac{1}{r} \int_1^\beta \frac{\phi(u)}u du}
% = \mathrm{e}^{-\frac{1}{r} \int_0^\infty (1-\mathrm{e}^{(1-\beta) x}) \frac{\mathrm{e}^{-x}\nu(x,\infty)}x \mathrm{d}x}, \quad \beta>1.
% \end{equation*}
%

\section{Examples}\label{section:examples}
Before studying some examples let us recall the definition of a transformation recently introduced by  Kyprianou and Patie \cite{Kyprianou-Patie-08}, which will be helpful in our context. For any Laplace exponent $\phi$ of a subordinator, that is of the form \eqref{eq:le_s}, we write for any $\gamma\geq 0$
\[\mathcal{T}_{\gamma}\phi(\beta)=\frac{\beta}{\beta+\gamma}\phi(\beta+\gamma).\]
Then, in \cite{Kyprianou-Patie-08}, it is proved that $\mathcal{T}_{\gamma}\phi$ is  the Laplace exponent of the driftless subordinator with L\'evy measure $\nu_\gamma(\mathrm{d}x)=\mathrm{e}^{-\gamma x}(\nu(\mathrm{d}x)+\gamma\nu(x,\infty)\mathrm{d}x)$. In particular, $\nu_\gamma(x,\infty)=\mathrm{e}^{-\gamma x}\nu(x,\infty)$. Obviously, $\mathcal{T}_0\phi=\phi$. We assume throughout this section that $\int_1^\infty \log(y)\nu(\mathrm{d}y)<\infty$ and therefore the quantity $\varphi_r(\beta)$ defined by
\begin{equation*}
\varphi_{r}(\beta) = \frac{1}{r}\int_0^{\beta}\frac{\phi(s)}{s}\mathrm{d}s,
\end{equation*}
is well-defined for $\beta\geq0$. It is easily seen, after a change of variable, that
\begin{equation*}
\varphi^{(\gamma)}_{r}(\beta) := \frac{1}{r}\int_0^{\beta}\frac{\mathcal{T}_{\gamma}\phi(s)}{s}\mathrm{d}s \\
= \varphi_{r}(\beta+\gamma)-\varphi_{r}(\gamma).
\end{equation*}
In other words, the action of the mapping $\mathcal{T}_\gamma$ on the backward Laplace exponent is equivalent to the action of the Esscher transform on the Laplace exponent of the limiting distribution.
In particular, if $W(x;\gamma)$  stands for the limiting distribution function associated to the backward Laplace exponent $\mathcal{T}_{\gamma}\phi$, then we have the following simple relationship
\begin{equation} \label{eq:Tgamma}
W'(x;\gamma) = \mathrm{e}^{-\varphi_{r}(\gamma)-\gamma x} W'(x;0), \quad x>0.
\end{equation}
(Note that by \cite[Theorem 28.4]{Sato-99}, $W$ is always differentiable on $(0,\infty)$.)

Next we detail some examples where the function $W$ and/or the distribution of the absolute ruin time can be given in closed form. For more examples of cases where $W$ is explicit, we refer to Iksanov and Jurek \cite{iksanovjurek}. We end this section by giving an example that illustrates Theorem \ref{thm:sr}.

\subsection{The compound Poisson case with exponential jumps}
Let us start with the case where for $x>0$,
\begin{equation*}
\nu(x,\infty)=\eta\mathrm{e}^{-\delta x}, \quad \delta,\eta>0.
\end{equation*}
Hence the subordinator $Z$ is a compound Poisson process with exponentially distributed jumps.
% for any $t\geq0$,
%\[ Z_t = \sum_{n=1}^{N_t} \xi_i\]
%where $N_t$ is a Poisson process of parameter $\lambda>0$ and  $(\xi_i)_{i\geq1}$ is a sequence of independent and identically distributed random variables such that $\xi_1 \sim {\rho}_{\mu}, \: \mu>0$. The Laplace transform of the semigroup of $X$ under $\widehat{\mathbb{P}}^{(-r)}_x$ is easily computed and we have
%\begin{equation} \label{eq:lt_exp}
%\widehat{P}^{(-r)}_t e_{\beta}(x) = \mathrm{e}^{-\beta \mathrm{e}^{-rt}x}\left(\frac{\beta \mathrm{e}^{-rt}+\mu}{\beta+\mu}\right)^{\lambda/r}.
%\end{equation}
%Thus,
%\begin{equation}
%\widehat{\mathbb{E}}^{(-r)}\left[\mathrm{e}^{-\beta X_{\infty}}\right]  = \left(\frac{\mu}{\beta+\mu}\right)^{\lambda/r}.
%\end{equation}
%That is
%\[ W'(y) = \frac{ \mu^{\lambda/r}}{\Gamma(\lambda/r)} y^{\lambda/r-1}\mathrm{e}^{-\mu y},\: y>0. \]
We have
\begin{equation*}
\varphi_r(\beta)= \frac{\eta}r\log \left( 1+\beta/\delta  \right)
\end{equation*}
and so the Laplace transform of $X$ under $\widehat{\mathbb{P}}^{(-r)}_x$ is easily computed via \eqref{laplacetransform}. By inverting this Laplace transform, Perry et al. \cite{Perry-01} shows that the transition function of $X$ under $\widehat{\mathbb{P}}^{(-r)}_0$ admits the following form
\begin{equation*}
\widehat{\mathbb{P}}^{(-r)}_0 (X_t \in \mathrm{d}y) = \mathrm{e}^{-\eta t} \delta_0(\mathrm{d}y) + \frac{\eta \delta}{r}\mathrm{e}^{-\eta t}(\mathrm{e}^{rt}-1) \mathrm{e}^{-\delta x} {}_1F_1\left(1-\frac{\eta}{r};2;\delta(1-\mathrm{e}^{rt})y\right)\mathrm{d}y,\quad y>0,
\end{equation*}
where $\delta_0$ stands for the dirac point mass  at $0$ and  ${}_1F_1$ is the confluent hypergeometric function, see e.g.~Lebedev \cite[Section 9.9]{Lebedev-72}. Hence, from Theorem \ref{thm:1}, we deduce that
\begin{equation*}
{\mathbb{P}}^{(r)}_x (\tau_{0}>t) = \mathrm{e}^{-\eta t} \left(1 + \frac{\eta \delta}{r}(\mathrm{e}^{rt}-1)\int_0^{x}\mathrm{e}^{-\delta y}{}_1F_1\left(1-\frac{\eta}{r};2;\delta(1-\mathrm{e}^{rt})y \right) \mathrm{d}y\right).
%\widehat{\mathbb{P}}^{(r,c)}_x (\tau_{-c/r}=\infty) &=& \frac{ \delta^{\eta/r}}{\Gamma(\eta/r)} \int_0^{x+c/r} y^{\eta/r-1}\mathrm{e}^{-\delta y} dy.
\end{equation*}
From Corollary \ref{cor:1}, it follows that ${\mathbb{P}}^{(r)}_x (\tau_{0}=\infty)$ (or equivalently $W(x)$) equals the gamma distribution with shape parameter $\eta/r$ and scale parameter $1/\delta$; this fact was first established by Gerber \cite{Gerber-71}.

%%%%%%%%%%%%%%%%%%%%%%%%%%%%%%%%%%%%%%%%
%\begin{comment}
\subsection{The compound Poisson process with Linnik distribution }
 We consider an example found in Iksanov and Jurek \cite{iksanovjurek}. Assume that the L\'evy measure $\nu$ of $Z$ is given by
\begin{equation*}
\nu(x,\infty)=\eta E_\alpha(-\delta x^\alpha ), \quad x>0,
%=\sum_{k=0}^\infty \frac{ (-1)^k  \beta^k x^{\alpha k} }{\Gamma(1+\alpha k)},
\end{equation*}
where $ \eta,\delta>0, 0<\alpha\leq1$ and
\[E_{\alpha}(x)=\sum_{k=0}^\infty \frac{x^k}{\Gamma(1+\alpha k)}\] is the Mittag-Leffler function and $\Gamma$ is the gamma function.
Hence the subordinator $Z$ is a compound Poisson process with arrival rate $\eta$ and with jumps distributed according to a  positive Linnik distribution. We have $\phi(\beta)=\frac{\eta \beta^\alpha}{\delta+\beta^\alpha}$ and thus
\[\varphi_r(\beta)=\frac{\eta}{r\alpha}\log(1+\beta^\alpha/\delta).\]
We point out that in the case $\alpha=1$, the Linnik distribution boils down to the exponential distribution and hence this example can be seen as a generalization of the previous. Next, we deduce, from the identity \eqref{laplacetransform} below, that
\begin{equation*} %\label{eq:lt_exp_l}
\widehat{\E}^{(-r)}_0\left[ \mathrm{e}^{-\beta X_t}\right] =  \left(\frac{\delta^{-1}\beta^{\alpha} \mathrm{e}^{-\alpha rt}+1}{\delta^{-1}\beta^{\alpha}+1}\right)^{\frac{\eta}{\alpha r}}.
\end{equation*}
Denote for $\kappa>0$ by  ${}_{\kappa}W(x)$ the increasing function on $[0,\infty)$ characterized through its Laplace transform which is given by
\begin{equation*}
\int_0^\infty \mathrm{e}^{-\beta x}{}_{\kappa}W(\mathrm{d}x)=\left({\delta^{-1}\beta^{\alpha}+1}\right)^{-\kappa}, \quad \beta\geq0.
\end{equation*}
Note that  ${}_{\frac{\eta}{\alpha r}}W(x)=\mathbb{P}_x^{(r)}(\tau_0=\infty)$.
By means of the binomial
formula, we get, for any $\beta^{\alpha}>1$,
\begin{eqnarray*}
(1+\delta^{-1}\beta^{\alpha})^{-\kappa} &=& \sum_{n=0}^{\infty} (-1)^{n}
\frac{\Gamma(\kappa+n)}{n!\Gamma(\kappa)}
(\delta^{-1}\beta^{\alpha})^{-(n+\kappa)},
\end{eqnarray*}
and hence we obtain by a term-by-term inversion
\begin{eqnarray*}
{}_{\kappa}W'(x)  &=& \delta^{\kappa} x^{\alpha\kappa-1}\sum_{n=0}^\infty (-1)^n\frac{\Gamma(\kappa+n)}{\Gamma(\kappa)\Gamma(\alpha(n+\kappa))n!}(\delta x^{\alpha})^{n}\\
&=& \delta^{\kappa} x^{\alpha\kappa-1} \fx{1}{1} \left( \left.\begin{array}{c}
                  (1,\kappa) \\
                  (\alpha,\alpha\kappa
                  )
                \end{array} \right|
 \: - \delta x^{\alpha}\right)
 \end{eqnarray*}
where $\fx{1}{1}$ is a Wright hypergeometric function, see e.g.~Braaksma \cite[Chap.~12]{Braaksma-64}. Note that for $\alpha=1$,
\begin{eqnarray*}
{}_{\kappa}W'(x)  &=& \frac{\delta^{\kappa}}{\Gamma(\kappa)} x^{\kappa-1}\mathrm{e}^{-\delta x},
 \end{eqnarray*}
recovering the exponential jumps case studied above. Next, observing that for $\beta$ big enough by the binomial theorem,
\begin{equation*}
\begin{split}
\left(\frac{\delta^{-1}\beta^{\alpha} \mathrm{e}^{-\alpha rt}+1}{\delta^{-1}\beta^{\alpha}+1}\right)^{\frac{\eta}{\alpha r}}
= & \left( 1-\frac{1-\mathrm{e}^{-\alpha rt}}{\delta^{-1}\mathrm{e}^{-\alpha rt}\beta^{\alpha}+1} \right)^{-\frac{\eta}{\alpha r}} \\
= & \mathrm{e}^{-\eta t} + \sum_{n=1}^\infty \frac{\Gamma\left(n+\frac{\eta}{\alpha r}\right)}{\Gamma\left(\frac{\eta}{\alpha r}\right)n!} \frac{(1-\mathrm{e}^{-\alpha rt})^n} {(1+\delta^{-1}\mathrm{e}^{-\alpha rt}\beta^{\alpha})^n} ,
\end{split}
\end{equation*}
we get, by means of Laplace transform inversion and Theorem \ref{thm:1},
\begin{equation}\label{eq:ser}
\mathbb{P}^{(r)}_x (\tau_{0}>t) = \mathrm{e}^{-\eta t} \left( 1 +
 \mathrm{e}^{rt}\sum_{n=1}^\infty \frac{\Gamma\left(n+\frac{\eta}{\alpha r}\right)}{\Gamma\left(\frac{\eta}{\alpha r}\right)n!}  (1-\mathrm{e}^{-\alpha rt})^n {}_{n}W(\mathrm{e}^{rt}x) \right).
\end{equation}
Note that, for any $\kappa>0$, the power series
\[\sum_{n=0}^\infty \frac{\Gamma(\kappa+n)}{n!}x^n\]
is analytic in the disc of radius $1$. Since for any $t>0$, $|1-\mathrm{e}^{-rt}|<1$ and, clearly  ${}_nW(y)\leq 1$ for any $y>0$ and $n=1,2\ldots$, we deduce that the series on the right-hand side of  \eqref{eq:ser} is uniformly convergent on $\R^+\times\R^+$.
Note also that the identity \eqref{eq:ser} simplifies when $\frac{\eta}{\alpha r}$ is an integer. In particular, when ${\eta}={\alpha r}$, the finite-time ruin probability is simply a weighted sum of $1$ and the infinite-time ruin probability, i.e.
\begin{equation*}
\mathbb{P}^{(r)}_x (\tau_{0}>t) = \mathrm{e}^{-\eta t} + (1-\mathrm{e}^{-\eta t}){}_{1}W(x).
\end{equation*}
Moreover, when $\alpha=1$, we have
\begin{equation*}
\begin{split}
 \mathrm{e}^{rt}\sum_{n=1}^\infty \frac{\Gamma\left(n+\frac{\eta}{  r}\right)}{\Gamma\left(\frac{\eta}{  r}\right)n!}  (1-\mathrm{e}^{-rt})^n {}_{n}W(\mathrm{e}^{rt}x) = &
\int_0^x \sum_{n=1}^\infty \frac{\Gamma\left(n+\frac{\eta}{  r}\right)}{\Gamma\left(\frac{\eta}{  r}\right)n!} (\mathrm{e}^{rt}-1)^n  \frac{\delta^{n}}{\Gamma(n)} y^{n-1}\mathrm{e}^{-\delta\mathrm{e}^{rt} y}\mathrm{d}y \\
= & \delta(\mathrm{e}^{rt}-1)\int_0^x \mathrm{e}^{-\delta \mathrm{e}^{rt} y} {}_1F_1 \left( 1+\frac{\eta}{r};2;(\mathrm{e}^{rt}-1)\delta y \right)  \mathrm{d}y  \\
= & \delta(\mathrm{e}^{rt}-1)\int_0^x \mathrm{e}^{-\delta y} {}_1F_1 \left( 1-\frac{\eta}{r};2;(1-\mathrm{e}^{rt})\delta y \right) \mathrm{d}y
\end{split}
\end{equation*}
where the last equality follows from the identity $\mathrm{e}^{-y}{}_1F_1\left(a;b;y\right)={}_1F_1\left(b-a;b;-y\right)$, see,  after correcting the obvious typo, formula 9.11.2 in \cite{Lebedev-72}. This is  consistent with the previous example. We finally note that when $0<\alpha<1$, we can use \eqref{eq:Tgamma} to get a new example where the infinite-time ruin probability is explicit.
%\end{comment}
%%%%%%%%%%%%%%%%%%%%%%%%%%%%%%%%%%%%%%%%%%%%%%%%%%%%%%%%%%%%%

\subsection{The stable subordinator}
We assume that $Z$ under $\mathbb{P}$ is a stable subordinator of index $0<\alpha<1$.  Its
Laplace exponent takes the form
$\phi(\beta)=\frac{\beta^{\alpha}}{\cos(\pi\alpha/2)}$, $\beta \geq0$.
It is easy to verify that in this case
$\varphi_r(\beta) =\frac{ \beta^{\alpha}}{\cos(\pi\alpha/2)\alpha r}$, $\beta \geq0$.
One can check by taking Laplace transforms and using the scaling property of $Z$ that,
\begin{equation*}
\widehat{\mathbb{P}}^{(-r)}(X_t\in\mathrm{d}y) = {\mathbb{P}} \left( v^{1/\alpha}(t)Z_{1} \in\mathrm{d}y \right) \quad \text{where $v(t) =\frac{1-\mathrm{e}^{-\alpha rt}}{\alpha r}$.}
\end{equation*}
Hence, by writing $P_{\alpha}$ for the distribution of a positive stable random variable of index $\alpha$, we deduce  in combination with Theorem \ref{thm:1} that
\begin{equation*}
{\mathbb{P}}^{(r)}_x (\tau_{0}>t) = P_{\alpha}\left( xv^{-1/\alpha}(t) \right)
\end{equation*}
and by taking the limit as $t\rightarrow \infty$, we get from Corollary \ref{cor:1},
\begin{equation*}
{\mathbb{P}}^{(r)}_x (\tau_{0}=\infty) = P_{\alpha}\left((\alpha r)^{1/\alpha}x \right).
\end{equation*}
We mention that a series representation of stable densities can be found in Sato \cite{Sato-00}. Additional interesting representations has also been derived by Schneider \cite{Schneider-86}. In particular,  in the case $\alpha=1/2$, we recall that the stable distribution admits the well known simple expression
\begin{equation}
P_{1/2}(x)=\int_0^{x}\frac{1}{\sqrt{2\pi}} y^{-3/2}\mathrm{e}^{-1/(2y)}\mathrm{d}y.
\end{equation}
We consider now the image of $\phi$ by the mapping $\mathcal{T}_{\gamma}$ which gives
\begin{eqnarray*}
\mathcal{T}_{\gamma} \phi(\beta) = \frac{(\beta+\gamma)^{\alpha}-\gamma(\beta+\gamma)^{\alpha-1}}{\cos(\pi\alpha/2) }.
\end{eqnarray*}
%and
%\[ r \cos(\pi\alpha/2) \alpha \varphi^{(\gamma)}_r(\beta) = (\beta+\gamma)^{\alpha}-\gamma^{\alpha},\: \beta \geq0. \]
This Laplace exponent is a specific instance
of the family of characteristic exponents of truncated L\'evy processes constructed by Boyarchenko and Levendorskii \cite{Boyarchenko-L-00}, also called tempered stable processes.
From the identity \eqref{eq:Tgamma}, we deduce that
\[W(x;\gamma)=\mathrm{e}^{-\gamma^{\alpha}/(\cos(\pi\alpha/2) \alpha r)}(\alpha r)^{1/\alpha}\int_0^x \mathrm{e}^{-\gamma y} P'_{\alpha}\left( (\alpha r)^{1/\alpha} y\right) \mathrm{d}y.\]

\subsection{Example from Theorem \ref{thm:sr}}

We now give an example where we use the representation \eqref{eq:main} for the finite-time absolute survival probability in Theorem \ref{thm:sr}. Assume $\nu(x,\infty)=\ell(1/x)x^{-\alpha}$ with $0<\alpha<1$ and
\begin{equation*}
\ell(x)= C \mathbb I_{\{x\geq A\}}, \quad \text{where $C,A>0$.}
\end{equation*}
Then $\nu$ is a L\'evy measure that satisfies all the conditions of Theorem \ref{thm:sr}. Note that this L\'evy measure corresponds to a subordinator with stable-like jumps near the origin and no jumps larger than $A$. This model with claims bounded in size, might for instance be used in the case where the insurance company  has excess-of-loss reinsurance. A straightforward computation involving integration by parts, shows that
\begin{equation*}
%\begin{split}
\varphi_r(\beta)= \frac{C}{r\alpha}  \left( \beta^\alpha \Gamma(1-\alpha;0,\beta A) - A^{-\alpha} \left(1-\mathrm{e}^{-\beta A}\right) \right),
\end{equation*}
where  $\Gamma(a;z_0,z_1) = \int_{z_0}^{z_1}\mathrm{e}^{-t}t^{a-1}\mathrm{d}t$ is the incomplete gamma function. We want to check how well the spectral representation \eqref{eq:main} performs.
Since we do not know an explicit expression for $W$, we proceed by numerical inversion. Let $N\in\mathbb Z_{+}$. For $n=0,1,2\ldots,N-1$ we compute $W^{(n+1)}$ by inverting the right-hand side of \eqref{nderiv} below using the (inverse) fast Fourier transform. We then numerically integrate $W'$ to get $W$ and consequently we get an approximation for the right-hand side of \eqref{eq:main} where the upper boundary $\infty$ is replaced by $N$.
We remark here that as $n$ increases, it gets more difficult to obtain accurate approximations for $W^{(n+1)}$ since these functions oscillate more and more as $n$ increases. In particular, for $n\geq0$, $W^{(n+1)}$ has $n$ distinct zeros, cf. \cite[Theorem 5.1]{satoyamazato}.
In order to get an approximation for the left-hand side of \eqref{eq:main}, note that from \eqref{laplacetransform}, it follows that the Fourier transform of the transition density of $X_t$ under $\widehat{\mathbb P}_0^{(-r)}$ is given by
\begin{equation}
\label{direct_fourier}
\widehat{\mathbb{E}}_0^{(-r)} \left[ \mathrm{e}^{-\mathrm{i}u X_t} \right]
 = \exp \left(  \varphi_r(\mathrm{i}u\mathrm{e}^{-rt}) -\varphi_r(\mathrm{i}u)  \right), \quad u\in\mathbb R,\:t\geq0,
\end{equation}
and by numerical Fourier inversion and integration, one gets (cf. Theorem \ref{thm:1}) an approximation of the finite-time absolute survival probability at a fixed time $t$.
Hence we  obtain numerical approximations of the truncation error
corresponding to \eqref{eq:main} which we define by
\begin{equation*}
e_{N,t}=\max_{i=0,1,\ldots,M} \left|\mathbb{P}^{(r)}_{hi}(\tau_{0}> t )  - \sum_{n=0}^N \mu_nW^{(n)}(hi)\mathrm{e}^{-rnt}\right|.
\end{equation*}
For our example we take $r=0.2$, $\alpha=0.5$, $C=A=1$, $h=0.2$ and $M=125$. All the calculations are done in Matlab.
In Table \ref{table}, numerical approximations of the truncation error are displayed for different values of $N$ and $t$. One sees that convergence takes place for $t=7,10,15$ as $N$ grows and that for a fixed $N$, the error becomes smaller as $t$ grows. We point out that the numbers in the table, besides the truncation error, also consists of integration error due to the Fourier inversion. In particular, the integration error will dominate the truncation error for high values of $N$ (and low values of $t$) due to the highly oscillating nature of $W^{(n)}$ for large values of $n$; this might explain the large value in the table for $N=16$ and $t=5$.
The spectral representation performs badly for $t=3$, but note that $3<t_\alpha\approx 3.46$ and hence this case is not covered by Theorem \ref{thm:sr}. The convergence of $\sum_{n=0}^N \mu_n W^{(n)}(x)\mathrm{e}^{-rnt}$ to $\mathbb{P}^{(r)}_{x}(\tau_{0}> t )$ for $t=7$ is graphically displayed in Figure 1.

We remark that the benefit of computing finite-time absolute survival/ruin probabilities via \eqref{eq:main} is that one can
quickly calculate these probabilities for a whole range of time points $t$, whereas if ones uses the method via \eqref{direct_fourier}, one has  to perform a separate Fourier inversion for each  $t$. Moreover, by using the spectral representation, one can take advantage of \eqref{eq:Tgamma} to quickly obtain the distribution of the absolute ruin time in the case where the tail of the L\'evy measure equals $\mathrm{e}^{-\gamma x}\nu(x,\infty)$ with $\gamma>0$.

% Table generated by Excel2LaTeX from sheet 'Blad1'
\begin{table}[ht]
\centering
\begin{small}
\begin{tabular}{|c|c c c c c|}
\hline
\backslashbox{$N$}{$t$}   &          3 &          5 &          7 &         10 &         15 \\
\hline \hline
         0 &      0.905 &      0.718 &      0.526 &      0.312 &      0.130 \\

         1 &      0.768 &      0.461 &      0.244 &      0.091 &      0.027 \\

         2 &      1.283 &      0.453 &      0.139 &      0.025 &      0.020 \\

         3 &      1.587 &      0.385 &      0.090 &      0.024 &      0.021 \\

         4 &      2.424 &      0.426 &      0.088 &      0.025 &      0.021 \\

         6 &      4.080 &      0.349 &      0.039 &      0.022 &      0.021 \\

         9 &      9.320 &      0.237 &      0.033 &      0.022 &      0.021 \\

        12 &     22.508 &      0.167 &      0.031 &      0.022 &      0.021 \\

        16 &    582.088 &      0.887 &      0.030 &      0.022 &      0.021 \\ \hline

\end{tabular}
\vspace{0.4cm}

\caption{Numerical approximations of $e_{N,t}$ for different values of $N$ and $t$.}
\label{table}
\end{small}
\end{table}

\begin{figure}
\label{figure}
\centering
\includegraphics[scale=.6]{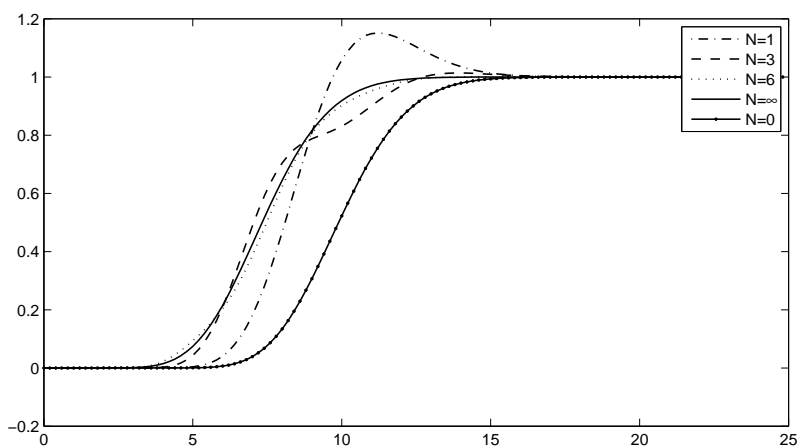}
\begin{small}
\caption{Graph of $x\mapsto\sum_{n=0}^N \mu_n W^{(n)}(x)\mathrm{e}^{-rnt}$ for $t=7$ and  $N=0,1,3,6,\infty$.}
\end{small}
\end{figure}

\section{Proofs}\label{section:proofs}

\begin{proof}[\textbf{Proof of Theorem \ref{thm:1}}]
First, by means of a change of variables and the duality lemma for L\'evy processes, see e.g.~\cite[p.~45]{Bertoin-96}, we obtain
\begin{eqnarray}
\label{little_duality}
%\begin{split}
\mathbb{P}\left(\int_0^t  \mathrm{e}^{-rs} \mathrm{d}Z_s\in  \mathrm{d}y \right) &=&  \mathbb{P}\left(-\int_0^t  \mathrm{e}^{-r(t-u)} \mathrm{d}Z_{t-u}\in  \mathrm{d}y \right)  \nonumber \\
&=& \mathbb{P}\left(-\int_0^t  \mathrm{e}^{-r(t-u)} \mathrm{d}(Z_{t-u}-Z_t) \in  \mathrm{d}y\right)  \nonumber  \\
&=& \mathbb{P}\left(\int_0^t  \mathrm{e}^{-r(t-u)} \mathrm{d}Z_u \in  \mathrm{d}y \right).
%\end{split}
\end{eqnarray}
Let $x>0$. Now using that the ruin time is absolute, \eqref{eq:def-X} and \eqref{little_duality},
\begin{eqnarray*}
% \begin{split}
 \mathbb{P}_x^{(r)}(\tau_0\leq t) &= & \mathbb{P}_x^{(r)}\left(X_t<0\right)\\
&=&  \mathbb{P}\left(-\int_0^t  \mathrm{e}^{-rs} \mathrm{d}Z_s <-x\right) \\
&= &   \mathbb{P}\left(-\int_0^t  \mathrm{e}^{-r(t-u)} \mathrm{d}Z_{u} <-x\right)\\
&=&   \widehat{\mathbb{P}}^{(-r)}_0\left(X_t>x\right),
% \end{split}
\end{eqnarray*}
which gives the first assertion. The second follows  directly from the well-known expression for the Laplace transform of $X_t$ under $\widehat{\mathbb P}^{(-r)}_x$ which reads,
\begin{equation}
\label{laplacetransform}
\widehat{\mathbb{E}}_x^{(-r)} \left[ \mathrm{e}^{-\beta X_t} \right]
 = \exp \left( -\beta x \mathrm{e}^{-rt} +\int_0^t \phi(\lambda\mathrm{e}^{-rs})\mathrm{d}s  \right), \quad \beta,t\geq0, x\in\mathbb R,
\end{equation}
see e.g.~Sato \cite{Sato-99}.
% Thus, from the identity \eqref{laplacetransform}, we deduce that
% \begin{equation*}
% \begin{split}
% \mathbb{P}^{(r,c)}_{{\bf{e}}_{\beta}}(\tau_{-c/r}> t) =& \beta\int_{-c/r}^\infty  e^{-\beta (x+c/r)} \mathbb{P}^{(r,c)}_x(\tau_{-c/r}> t)  dx \\
% =& \int_{-c/r}^\infty  e^{-\beta (x+c/r)} \int_0^{x+c/r} \widehat{\mathbb{P}}^{(-r)}_0 ( X_t\in dy)  dx \\
% = &  \int_0^\infty e^{-\beta y}\widehat{\mathbb{P}}^{(-r)}_0 ( X_t\in dy) \\
% = &  \exp \left( -\int_0^t\phi(\beta e^{-rs}) ds \right)
% \end{split}
% \end{equation*}
% which completes the proof of the Theorem.
\end{proof}

Denote by $P^{(r)}_t$ (resp.~$\widehat{P}^{(-r)}_t$) the semigroup of the process $X$ under $\mathbb{P}^{(r)}$  (resp.~$\widehat{\mathbb{P}}^{(-r)}$), i.e.~for any  positive measurable function $f$ on $\R$,
\begin{equation*}
P^{(r)}_t f(x)= \mathbb{E}_x^{(r)}\left[f(X_t)\right],\quad x\in \R, t\geq0.
\end{equation*}
For the proof of Theorem \ref{thm:scale} we need the following lemma; note that part (ii) is a (weak) duality result between the two semigroups.
\begin{lemma}\label{lemma_interval} \
\begin{itemize}
 \item[(i)] For any $t\geq0$ and $x\in\R$,
\begin{equation}
\label{asmussen}
(X_t,\mathbb{P}_x^r)\stackrel{(d)}{=}( \mathrm{e}^{rt}x+X_t,\mathbb{P}^r_0),
\end{equation}
where $\stackrel{(d)}{=}$ stands for the identity in distribution.
\item[(ii)] For any two positive measurable functions $f$ and $g$,
\begin{equation} \label{big_duality}
\langle P^{(r)}_tf,g\rangle\: = \: \mathrm{e}^{-rt}\langle f,\widehat{P}^{(-r)}_t g\rangle,
\end{equation}
where  $\langle f,g\rangle =\int_{\mathbb{R}} f(x)g(x) \mathrm{d}x$.

\item[(iii)] For all $a\geq0$, $\mathbb{P}_x^{(r)}\left(X_t\in[-a,a] \textrm{ for all } t\geq0\right)=0$.
\end{itemize}
%The process $X$ under $\mathbb{P}^{(r)}$ is not confined to any finite interval forever, i.e.
\end{lemma}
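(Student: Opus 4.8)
The plan is to prove the three parts in the order stated, since (ii) rests on (i) together with the distributional identity \eqref{little_duality}, while (iii) is a pathwise assertion handled by a Borel--Cantelli argument on the increments of $Z$.

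Part (i) is read off directly from \eqref{eq:def-X}: with $c=0$, for the same driftless subordinator $Z\sim\mathbb{P}$ one has $X_t=\mathrm{e}^{rt}x-\mathrm{e}^{rt}\int_0^t\mathrm{e}^{-rs}\mathrm{d}Z_s$ under $\mathbb{P}_x^{(r)}$, and the second summand is precisely $X_t$ under $\mathbb{P}_0^{(r)}$; hence $X_t=\mathrm{e}^{rt}x+X_t^0$ with $X_t^0\stackrel{(d)}{=}(X_t,\mathbb{P}_0^{(r)})$. For part (ii) I would use (i) to write $P_t^{(r)}f(x)=\mathbb{E}_0^{(r)}[f(\mathrm{e}^{rt}x+X_t)]=\int_{(-\infty,0]}f(\mathrm{e}^{rt}x+z)\,\kappa_t(\mathrm{d}z)$, where $\kappa_t$ is the law of $X_t$ under $\mathbb{P}_0^{(r)}$, a probability measure carried by $(-\infty,0]$ since $Z\ge0$. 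Because $f,g\ge0$, Tonelli's theorem permits interchanging the two integrations in $\langle P_t^{(r)}f,g\rangle$ freely; after the change of variables $w=\mathrm{e}^{rt}x+z$ (so $\mathrm{d}x=\mathrm{e}^{-rt}\mathrm{d}w$) this gives
\[\langle P_t^{(r)}f,g\rangle=\mathrm{e}^{-rt}\int_{\mathbb{R}}f(w)\,\mathbb{E}_0^{(r)}\big[g(\mathrm{e}^{-rt}(w-X_t))\big]\,\mathrm{d}w.\]
It then remains to identify the inner expectation with $\widehat{P}_t^{(-r)}g(w)$: since $-X_t=\mathrm{e}^{rt}\int_0^t\mathrm{e}^{-rs}\mathrm{d}Z_s$ under $\mathbb{P}_0^{(r)}$, the argument of $g$ equals $\mathrm{e}^{-rt}w+\int_0^t\mathrm{e}^{-rs}\mathrm{d}Z_s$, whereas $\widehat{P}_t^{(-r)}g(w)$ is the expectation of $g$ evaluated at $\mathrm{e}^{-rt}w+\mathrm{e}^{-rt}\int_0^t\mathrm{e}^{rs}\mathrm{d}Z_s$; the two coincide by the identity $\int_0^t\mathrm{e}^{-rs}\mathrm{d}Z_s\stackrel{(d)}{=}\int_0^t\mathrm{e}^{-r(t-u)}\mathrm{d}Z_u=\mathrm{e}^{-rt}\int_0^t\mathrm{e}^{ru}\mathrm{d}Z_u$ already recorded in \eqref{little_duality}. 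This yields \eqref{big_duality}.

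For part (iii), work under $\mathbb{P}_x^{(r)}$ (so that $Z\sim\mathbb{P}$) and set $Y_t=\int_0^t\mathrm{e}^{-rs}\mathrm{d}Z_s$, a nondecreasing process with $X_t=\mathrm{e}^{rt}(x-Y_t)$. On the event $A=\{X_t\in[-a,a]\text{ for all }t\ge0\}$ one has $|x-Y_t|\le a\mathrm{e}^{-rt}$ for every $t$; letting $t\to\infty$ forces $Y_t\uparrow x$, so $Y_\infty-Y_n=x-Y_n\le a\mathrm{e}^{-rn}$ for every integer $n$. Writing $Y_\infty-Y_n=\int_n^\infty\mathrm{e}^{-rs}\mathrm{d}Z_s=\mathrm{e}^{-rn}\int_0^\infty\mathrm{e}^{-ru}\mathrm{d}(Z_{n+u}-Z_n)\ge\mathrm{e}^{-rn}J_n$, with $J_n:=\int_0^1\mathrm{e}^{-ru}\mathrm{d}(Z_{n+u}-Z_n)$, we obtain $A\subseteq\bigcap_{n\ge0}\{J_n\le a\}$. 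By stationarity and independence of the increments of $Z$, the $J_n$ are i.i.d.; moreover $J_n\ge\mathrm{e}^{-r}(Z_{n+1}-Z_n)$, and $Z_1$ has unbounded support (fixing $\varepsilon$ with $0<\nu(\varepsilon,\infty)<\infty$, the jumps of $Z$ exceeding $\varepsilon$ arrive according to a Poisson process, so $\mathbb{P}(Z_1\ge k\varepsilon)>0$ for every $k$; here $\nu\not\equiv0$ since $Z$ is a genuine subordinator). Hence $\mathbb{P}(J_0>a)>0$, and the second Borel--Cantelli lemma gives $\mathbb{P}(J_n\le a\text{ for all }n)=0$, whence $\mathbb{P}_x^{(r)}(A)=0$.

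The main difficulty is part (iii): one must exclude the possibility that $X$ oscillates forever inside a fixed band $[-a,a]$. A naive ``uniform probability of escaping $[-a,a]$ within one unit of time'' estimate breaks down near the origin, where the mean-reverting drift $rX_t$ is too weak to move the process appreciably; rerouting the argument through the monotone perpetual integral $Y$ and an i.i.d.\ Borel--Cantelli argument circumvents this, the one genuinely non-vacuous ingredient being that $Z_1$ has unbounded support even when $\nu$ is supported on a bounded set.
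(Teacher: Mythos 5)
Your proof is correct. Parts (i) and (ii) coincide with the paper's argument: (i) is read off from \eqref{eq:def-X}, and (ii) is the same chain of Tonelli, the translation identity \eqref{asmussen}, a change of variables, and the time-reversal identity \eqref{little_duality}. Where you genuinely diverge is part (iii). The paper obtains (iii) as a corollary of the duality (ii): plugging $f=\mathbf{1}_{[-a,a]}$ and $g=\mathbf{1}_{[0,\infty)}$ into \eqref{big_duality} yields $\int_0^\infty\mathbb{P}_x^{(r)}(X_t\in[-a,a])\,\mathrm{d}x=2a\mathrm{e}^{-rt}$, and Fatou's lemma applied to $\liminf_{t\to\infty}\{X_t\in[-a,a]\}$ then kills the integral; this is shorter, but as written it only delivers the conclusion for Lebesgue-almost every $x\geq 0$. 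Your route --- writing $X_t=\mathrm{e}^{rt}(x-Y_t)$ with $Y$ the nondecreasing perpetuity, observing that confinement forces $Y_\infty-Y_n\leq a\mathrm{e}^{-rn}$, and applying Borel--Cantelli to the i.i.d.\ windowed integrals $J_n$ --- is longer but pathwise and valid for every individual $x$, and it makes explicit the hypothesis the statement actually needs, namely $\nu\not\equiv 0$ (so that $Z_1$ has unbounded support via the Poisson count of jumps exceeding a fixed $\varepsilon$); for $\nu\equiv 0$ and $x=0$ the assertion is false, so this is not a vacuous remark. In short, the paper's proof of (iii) buys brevity by recycling (ii), while yours buys uniformity in $x$ and transparency about where positivity of the L\'evy measure enters.
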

\begin{proof}
Item (i)  follows directly from \eqref{eq:def-X}. Next, using successively Tonelli, \eqref{asmussen}, a change of variables, \eqref{little_duality} and \eqref{asmussen} again, we get
\begin{equation*}
\begin{split}
\langle P^{(r)}_tf,g\rangle  = & \int_{\mathbb{R}}\mathbb{E}^{(r)}_x[f(X_t)]g(x) \mathrm{d}x
= \mathbb{E}^{(r)}_0 \left[ \int_{\mathbb{R}} f(X_t+ \mathrm{e}^{rt}x)g(x) \mathrm{d}x \right] \\
= & \mathbb{E}^{(r)}_0 \left[ \int_{\mathbb{R}}  \mathrm{e}^{-rt} f(z)g((z-X_t) \mathrm{e}^{-rt}) \mathrm{d}z \right] \\
%= &   e^{-rt} \int_{\mathbb{R}}  f(z)\mathbb{E}^r_0 \left[ g((z-X_t) e^{-rt}) \right]  dz  \\
= &   \mathrm{e}^{-rt} \int_{\mathbb{R}}  f(z)\mathbb{E} \left[ g\left(z \mathrm{e}^{-rt} - \int_0^t  \mathrm{e}^{-rs} \mathrm{d}Z_s  \right) \right]  \mathrm{d}z  \\
%= &   e^{-rt} \int_{\mathbb{R}}  f(z)\mathbb{E}^r_0 \left[ g\left(z e^{-rt} - \int_0^t  e^{-r(t-s)} dZ_s  ) \right]  dz  \\
= &  \mathrm{e}^{-rt} \int_{\mathbb{R}}  f(z)\mathbb{E} \left[ g\left( \mathrm{e}^{-rt} \left(z + \int_0^t  \mathrm{e}^{rs} \mathrm{d}(-Z_s)\right)  \right) \right]  \mathrm{d}z \\
%= &  \mathrm{e}^{-rt} \int_{\mathbb{R}}  f(z)\widehat{\mathbb{E}}^{ -r }_0 \left[ g( \mathrm{e}^{-rt}  z + X_t   ) \right]  dz \\
= &  \mathrm{e}^{-rt} \int_{\mathbb{R}}  f(z)\widehat{\mathbb{E}}^{(-r) }_z \left[ g(X_t) \right]  \mathrm{d}z \\
=  & \mathrm{e}^{-rt}\langle f,\widehat{P}^{(-r)}_t g\rangle,
\end{split}
\end{equation*}
yielding (ii). Finally, choosing $f(x)=\mathbf{1}_{\{x\in[-a,a]\}}$ and $g(x)=\mathbf{1}_{\{x\geq0\}}$ with $a\geq0$ in \eqref{big_duality} leads to
$\int_0^\infty \mathbb{P}_x^{(r)}(X_t\in[-a,a])\mathrm{d}x=2a\mathrm{e}^{-rt}$
and now by applying standard argument from measure theory (cf. Exercise 3.1.12 of \cite{stroock}) and Fatou's lemma, we get
\begin{eqnarray*}
%\label{exitinterval}
\int_0^\infty \mathbb{P}_x^{(r)}(\liminf_{t\rightarrow\infty} \{ X_t\in[-a,a]\} )\mathrm{d}x &\leq& \int_0^\infty \liminf_{t\rightarrow\infty} \mathbb{P}_x^{(r)}(X_t\in[-a,a])\mathrm{d}x \\
&\leq& \liminf_{t\rightarrow\infty} \int_0^\infty \mathbb{P}_x^{(r)}(X_t\in[-a,a])\mathrm{d}x =0.
\end{eqnarray*}
Hence  $\mathbb{P}_x^{(r)}(X_t\in[-a,a]$ for all $t\geq0)=0$ and the proof is completed.
\end{proof}

\begin{proof}[\textbf{Proof of Theorem \ref{thm:scale}}]
The first equality follows directly from Lemma \ref{lemma_interval}(iii) and \eqref{def_absoluteruin}. In order to prove the second equality,
we first show that for all $q\geq0$,
\begin{equation}
\label{invariant}
\mathrm{e}^{-qt} \mathbb{E}_x^{(r)} \left[ W_{q/r} (X_t) \right] =W_{q/r} (x).
\end{equation}
%%%%%%%%%%%%%%%%%%%
Let $\beta>0$, $q\geq0$ and recall that $\xi=\mathbb{I}_{\{\int_1^\infty \log(y)\nu(\mathrm{d}y)=\infty\}}$. From  the identity \eqref{eq:lt_w} and observing the convolution in \eqref{eq:dfi}, we get
\begin{equation}
\label{eq:lt_wq}
\int_{-\infty}^\infty \mathrm{e}^{-\beta x}W_q(x)\mathrm{d}x  = \beta^{-q-1}\mathrm{e}^{-\int_{\xi}^\beta \frac{\phi(u)}{ru} \mathrm{d}u}.
\end{equation}
On the other hand, using successively \eqref{big_duality}, \eqref{laplacetransform} and \eqref{eq:lt_wq}, we obtain, denoting $e_{\beta}(x)=\mathrm{e}^{-\beta x}$, $x\in\mathbb R$, that the Laplace transform of the left-hand side of \eqref{invariant} is given by
\begin{eqnarray*}
\mathrm{e}^{-qt} \langle e_{\beta},{P}^{(r)}_t W_{q/r}\rangle  &=& \mathrm{e}^{-(q+r)t} \langle \widehat{P}^{(-r)}_te_{\beta},W_{q/r}\rangle  \\
&=& \mathrm{e}^{-(q+r)t} \mathrm{e}^{-\int_{\xi}^\beta \frac{\phi(u)}{ru} \mathrm{d}u + \int_{\xi}^{\beta\mathrm{e}^{-rt}} \frac{\phi(u)}{ru}\mathrm{d}u } \langle e_{\beta \mathrm{e}^{-rt} },W_{q/r}\rangle  \\
&=& \beta^{-q/r-1}  \mathrm{e}^{-\int_{\xi}^\beta \frac{\phi(u)}{ru} \mathrm{d}u}.
\end{eqnarray*}
Hence by uniqueness of the Laplace transform and the continuity of $W_q$, \eqref{invariant} follows.

Now with the aid of \eqref{invariant} we can use the Dynkin formula \cite[Theorem 12.4]{Dynkin-65}, to derive
\begin{equation}
\mathbb{E}^{(r)}_x\left[ \mathrm{e}^{-q(t\wedge \tau_a^+)}W_{q}(X_{t\wedge \tau_a^+}) \right]= W_{q/r} (x).
\end{equation}
Since the mapping $W_{q}$ is increasing and continuous, the proof is completed by an argument of dominated convergence and  the property that under $\mathbb{P}_x^{(r)}$, $X_{\tau_a^+}=a$ on $\{\tau_a^+<\infty\}$ which follows by absence of positive jumps. %and the fact that the process does not stay in anyfirst passage time finite interval forever (cf. Lemma \ref{lemma_interval}(iii)).
\end{proof}

\begin{proof}[\textbf{Proof of Theorem \ref{thm:sr}}]
Denote $\varphi_r(v)=\int_0^{v}\frac{\phi(u)}{ru}\mathrm{d}u$, $v\in\mathbb C$. By assumption (i), $\varphi_r$ and consequently $\exp(\varphi_r(\cdot))$ are entire functions (see e.g.~\cite[Section 25]{Sato-99})  and hence we can write
\begin{equation}
\label{entire}
\mathrm{e}^{\varphi_r(v)} = \sum_{n=0}^\infty \mu_n v^n, \quad v\in\mathbb{C}.
\end{equation}
By assumption (ii), $\nu(0,\infty)=\infty$ and thus from  Sato and Yamazato \cite[Equation (2.20)]{satoyamazato}, we get
\begin{equation}
\label{nderiv}
W^{(n+1)}(y)=\frac{1}{2\pi}\int_{-\infty}^\infty (\mathrm{i}u)^n \mathrm{e}^{-\varphi_r(\mathrm{i}u)+\mathrm iuy}\mathrm{d}u, \quad n=0,1,2,\ldots.
\end{equation}
From the above identity, we deduce that
\begin{equation}
\label{fourierinversion}
\sum_{n=0}^\infty \mu_n W^{(n+1)}(y)v^n = \frac{1}{2\pi}\sum_{n=0}^\infty \int_{-\infty}^\infty \mu_n (\mathrm{i}u v)^n \mathrm{e}^{-\varphi_r(\mathrm{i}u)+\mathrm iuy}\mathrm{d}u
\end{equation}
and now we want to use Fubini to  switch the sum and the integral.
We have for $v>0$ by \eqref{entire},
\begin{equation}
\label{fubini}
\begin{split}
 \int_{-\infty}^\infty  \sum_{n=0}^\infty  \left|   \mu_n (\mathrm{i}u v)^n \mathrm{e}^{-\varphi_r(\mathrm{i}u)+\mathrm iuy}  \right| \mathrm{d}u
\leq & \sqrt{2}\int_{-\infty}^\infty \sum_{n=0}^\infty  \mu_n(|u| v)^n \mathrm{e}^{\Re(-\varphi_r(\mathrm{i}u))} \mathrm{d}u \\
= & \sqrt{2}\int_{-\infty}^\infty     \mathrm{e}^{\varphi_r(|u|v)} \mathrm{e}^{-\Re(\varphi_r(\mathrm{i}u))} \mathrm{d}u.
\end{split}
\end{equation}

Using Fubini, we have
\begin{eqnarray*}
r\Re(\varphi_r(\mathrm{i}u)) &=& \int_0^\infty (1-\cos(ux))\frac{\nu(x,\infty)}x\mathrm{d}x \\ &=& \int_0^\infty \int_0^u x\sin(xs)\mathrm{d}s \frac{\nu(x,\infty)}x\mathrm{d}x \\
&=&  \int_0^u \int_0^\infty \sin(xs) {\nu(x,\infty)} \mathrm{d}x \mathrm{d}s.
\end{eqnarray*}
Because of condition (ii), it follows by \cite[Equation (4.3.8)]{Bingham-Goldie-Teugels-89} that
\begin{equation*}
\int_0^\infty \sin(xs) {\nu(x,\infty)} \mathrm{d}x \sim s^{\alpha-1} \ell(s)\Gamma(1-\alpha)\cos\left(\frac{\pi\alpha}2\right), \quad \text{as $s\rightarrow\infty$}
\end{equation*}
and hence by Karamata's Theorem \cite[Theorem 1.5.11]{Bingham-Goldie-Teugels-89},
\begin{equation*}
%\label{sinetransform}
r\Re(\varphi_r(\mathrm{i}u)) \sim u^{\alpha} \ell(u)\frac{\Gamma(1-\alpha)}{\alpha}\cos\left(\frac{\pi\alpha}2\right), \quad \text{as $u\rightarrow\infty$}.
\end{equation*}
Here $f(x)\sim g(x)$ as $x\rightarrow\infty$ stands for $\lim_{x\rightarrow\infty} \frac{f(x)}{g(x)}=1$. Similarly for $u>0$ by Fubini,
\begin{eqnarray*}
r\varphi_r(u)&=&\int_0^\infty (1-\mathrm{e}^{-ux})\frac{\nu(x,\infty)}x\mathrm{d}x \\&=& \int_0^\infty \int_0^u x\mathrm{e}^{-sx}\mathrm{d}s \frac{\nu(x,\infty)}x \mathrm{d}x \\
&= & \int_0^u \int_0^\infty  \mathrm{e}^{-sx} {\nu(x,\infty)} \mathrm{d}x \mathrm{d}s
\end{eqnarray*}
and thus  by Karamata's theorem and Karamata's Tauberian theorem \cite[Theorem 1.7.1]{Bingham-Goldie-Teugels-89},
\begin{equation*}
r\varphi_r(u) \sim \ell(u)\frac{\Gamma(1-\alpha)}{\alpha}u^{\alpha}, \quad \text{as $u\rightarrow\infty$.}
\end{equation*}
In particular $r\varphi_r(u)$ is regularly varying at infinity with parameter $\alpha$ and thus
\begin{equation*}
\lim_{u\rightarrow\infty} \frac{r\varphi_r(vu)}{\ell(u)\frac{\Gamma(1-\alpha)}{\alpha}u^{\alpha}}=
\lim_{u\rightarrow\infty} \frac{r\varphi_r(vu)}{r\varphi_r(u)}\frac{r\varphi_r(u)}{\ell(u)\frac{\Gamma(1-\alpha)}{\alpha}u^{\alpha}}=v^\alpha.
\end{equation*}
It follows that $r[\Re(\varphi_r(\mathrm{i}u))-\varphi_r(vu)]\sim [\cos\left(\frac{\pi\alpha}2\right)-v^\alpha]\ell(u)\frac{\Gamma(1-\alpha)}{\alpha}u^{\alpha}$ as $u\rightarrow\infty$ if $v^\alpha\neq \cos\left(\frac{\pi\alpha}2\right)$.
Assume now that $\cos\left(\frac{\pi\alpha}2\right)>v^\alpha$ and let $\delta=\frac{\alpha}2$. Then $\Re(\varphi_r(\mathrm{i}u)) -\varphi_r(vu)$ is regularly varying at infinity with parameter $\alpha$ and it follows by e.g. \cite[Theorem 1.5.6(iii)]{Bingham-Goldie-Teugels-89} that there exists some $U\geq1$ and $A>0$ such that for all $u\geq U$, $\Re(\varphi_r(\mathrm{i}u))-\varphi_r(vu)\geq Au^{\alpha-\delta}$. Since $\Re(\varphi_r(\mathrm{i}u))$ and $\varphi_r(vu)$ are even and continuous functions, we can now conclude that the right-hand side of \eqref{fubini} is finite.

This allows us to use Fubini in \eqref{fourierinversion} for $v^\alpha<\cos\left(\frac{\pi\alpha}2\right)$ to get with the aid of \eqref{entire},
\begin{equation*}
\begin{split}
\sum_{n=0}^\infty \mu_n W^{(n+1)}(y)v^n = & \frac{1}{2\pi} \int_{-\infty}^\infty \sum_{n=0}^\infty \mu_n (\mathrm{i}u v)^n \mathrm{e}^{-\varphi_r(\mathrm{i}u)+\mathrm iuy}\mathrm{d}u \\
= & \frac{1}{2\pi} \int_{-\infty}^\infty   \mathrm{e}^{\varphi_r(\mathrm{i}vu)-\varphi_r(\mathrm{i}u)+\mathrm iuy}\mathrm{d}u  .
\end{split}
\end{equation*}
The right-hand side of the previous identity is the inverse Fourier transform of the Fourier transform of $\widehat P^{(-r)}_t(0,\mathrm{d}y)$ with $v=\mathrm{e}^{-rt}$. From earlier considerations, we see that this Fourier  transform is integrable when $t>0$. This implies in particular that for $t>0$, $\widehat P^{(-r)}_t(0,\mathrm{d}y)$ has a continuous density $\widehat p^{(-r)}_t(0,y)$, see e.g. the first line of Section 28 in Sato \cite{Sato-99}. Since $\widehat p^{(-r)}_t(0,y)$ is trivially integrable as well, we can now use the Fourier inversion theorem (see e.g.~\cite[Theorem 8.26]{folland}) to conclude
\begin{equation*}
\widehat p^{(-r)}_t(0,y)=\sum_{n=0}^\infty \mu_n W^{(n+1)}(y)\mathrm{e}^{-rnt}, \quad t>-\frac1{r\alpha}\log\cos\left(\frac{\pi\alpha}2\right).
\end{equation*}
The first identity follows then from the identity \eqref{asmussen}.

To prove the second identity, noting that  the right-hand side of \eqref{fubini} does not depend on $y$, we see that for any $f:\mathbb{R}\rightarrow\mathbb{R}$ measurable and integrable (i.e. $\int_{\mathbb{R}}|f(y)|\mathrm{d}y<\infty$), we can use Fubini to deduce
\begin{equation*}
\int_{\mathbb{R}} f(y)\widehat p^{(-r)}_t(0,y)\mathrm{d}y = \sum_{n=0}^\infty \mu_n\mathrm{e}^{-rnt} \int_{\mathbb{R}} f(y) W^{(n+1)}(y)\mathrm{d}y, \quad t>-\frac1{r\alpha}\log\cos\left(\frac{\pi\alpha}2\right).
\end{equation*}
Choosing $f(y)=\mathbf{1}_{\{y\leq x\}}$ and applying Theorem \ref{thm:1} shows \eqref{eq:main}. The last part follows by taking Laplace transforms on both sides, hereby making use of Fubini, \eqref{laplacetransform} and \eqref{eq:lt_w}.  Note that by p.190/191 of \cite{Sato-99},  $W^{(n+1)}(x)$ goes to zero as $x$ goes to $\pm\infty$ and hence $W^{(n+1)}$ is bounded on $(-\infty,\infty)$ which allows one to use Fubini.
%\begin{equation*}
%\begin{split}
%\int_{-\infty}^{\infty} \mathrm{e}^{-\beta y} \int_{-\infty}^{\infty}\widehat{\mathbb P}_t^{(-r)}(x,y)W^{(n+1)}(x)\mathrm{d}x\mathrm{d}y = & \mathrm{e}^{\varphi_r(\beta\mathrm{e}^{-rt})-\varphi_r(\beta)} \int_{-\infty}^{\infty} \mathrm{e}^{-\beta x\mathrm{e}^{-rt}} W^{(n+1)}(x)\mathrm{d}x
%\end{split}
%\end{equation*}
\end{proof}

\bibliographystyle{plain}

%\bibliography{./../../../Bibliography/bib_pp}
%\bibliography{bib_pp}

\end{document}